\newtheorem{theorem}{Theorem}
\newtheorem{corollary}{Corollary}
\newtheorem{lemma}{Lemma}
\newtheorem{claim}{Claim}
\newtheorem{definition}{Definition}
\newtheorem{remark}{Remark}
\newcommand{\junk}[1]{}
\newcommand{\costasnote}[1]{#1}
\newcommand{\notshow}[1]{{}}
\definecolor{MyGray}{rgb}{0.8,0.8,0.8}
\newtheorem*{rep@theorem}{\rep@title}
\newcommand{\newreptheorem}[2]{%
\newenvironment{rep#1}[1]{%
 \def\rep@title{#2 \ref{##1}}%
 \begin{rep@theorem}}%
 {\end{rep@theorem}}}
\begin{document}

\title{The Complexity of Optimal Mechanism Design}

\author {
Constantinos Daskalakis\thanks{Supported by a Sloan Foundation Fellowship, a Microsoft Research Faculty Fellowship, and NSF Award CCF-0953960 (CAREER) and CCF-1101491.}\\
EECS, MIT \\
\tt{costis@mit.edu}
\and
Alan Deckelbaum\thanks{Supported by Fannie and John Hertz Foundation Daniel Stroock Fellowship and NSF Award CCF-1101491.}\\
Math, MIT\\
\tt{deckel@mit.edu}
\and
Christos Tzamos\thanks{Supported by NSF Award CCF- 1101491.}\\
EECS, MIT\\
\tt{tzamos@mit.edu}
}
\addtocounter{page}{-1}
\maketitle

\begin{abstract}
Myerson's seminal work provides a computationally efficient revenue-optimal auction for selling one item to multiple bidders~\cite{Myerson81}. Generalizing this work to selling multiple items at once has been a central question in economics and algorithmic game theory, but its complexity has remained poorly understood. We answer this question by showing that a revenue-optimal auction in multi-item settings cannot be found and implemented computationally efficiently, unless ${\tt ZPP} \supseteq P^{\# {\tt P}}$. This is true even for  a single additive bidder whose values for the items are independently distributed on two rational numbers with rational probabilities. Our result is very general: we show that it is hard to compute {\em any encoding} of an optimal auction {\em of any format} (direct or indirect, truthful or non-truthful) that can be implemented in expected polynomial time. In particular, under well-believed complexity-theoretic assumptions, revenue-optimization in very simple multi-item settings can only be tractably approximated. 

We note that our hardness result applies to {\em randomized mechanisms} in a very simple setting, and is not an artifact of introducing combinatorial structure to the problem by allowing correlation among item values, introducing combinatorial valuations, or  requiring the mechanism to be deterministic (whose structure is readily combinatorial). Our proof is enabled by a flow-interpretation of the solutions of an exponential-size linear program for revenue maximization with an additional supermodularity constraint.


\end{abstract}
\thispagestyle{empty}

\newpage

\section{Introduction} \label{sec:intro}

Consider the problem facing a seller who is looking to sell items to bidders {given distributional information about their valuations, i.e.~how much they value each subset of the items}. This problem, called the {\em optimal mechanism design problem} (explained formally in Section~\ref{sec:interpretation}), has been a central problem in economics for a few decades and has recently received considerable attention in computer science as well. The special case of a single item is well-understood: in his seminal paper~\cite{Myerson81}, Myerson showed that the optimal single-item auction takes a simple form---when the values of the bidders for the item are independent. Moreover, if every bidder's value distribution is supported on rational numbers with rational probabilities, the optimal auction can be computed and implemented computationally efficiently---see e.g.~\cite{CaiDW12}. On the other hand, the general (multi-item)  problem has been open since Myerson's work and has remained poorly understood despite significant research effort devoted to its solution---see~\cite{ManelliV07} and its references for work on this problem by economists.


More recently, the problem received considerable attention in computer science where algorithmic ideas have provided insights into its structure and computational complexity. Results have come in two flavors: approximations~\cite{ChawlaHK07,ChawlaHMS10,BhattacharyaGGM10,Alaei11,KleinbergW12,DobzinskiFK11,HartN12,CaiDW13}, guaranteeing a constant fraction of the optimal revenue, and exact solutions~\cite{DW12,CaiDW12,AlaeiFHHM12,CaiDW12b}, guaranteeing full revenue extraction. These results provide  comprehensive solutions to the problem, even under very general allocation constraints~\cite{CaiDW12b,CaiDW13}, but do not pin down its computational complexity. In particular, all known exact mechanisms can be computed and implemented in time polynomial in the total number of valuations a bidder may have---i.e. the size of the support of each bidder's valuation-distribution; so they are computationally efficient only when the valuation distributions are provided explicitly in the specification of the problem, by listing their support together with the probability assigned to each valuation in the support.
 But this is not always the computationally meaningful way to describe the problem. The most trivial setting where this issue arises is that of a single additive bidder\footnote{An {\em additive bidder} is described by a value-vector $(v_1,\ldots,v_n)$, where $v_i$ is her value for item $i \in [n]$, so that her value for a subset $S$ of items is $\sum_{i \in S} v_i$.} whose values for the items are independent of support two. In this case, the bidder may have one of $2^n$ possible valuations, where $n$ is the number of items, and explicitly describing her valuation-distribution would require $\Omega(2^n)$ vectors and probabilities. However, a mechanism with complexity polynomial in $\Omega(2^n)$ is clearly inefficient, and one would like to pay complexity polynomial in the distribution's natural description complexity, i.e. the bits required to specify the distribution's $n$ marginals over the items. Such efficient mechanisms have not been discovered, except in item-symmetric settings~\cite{DW12}.

To sum up, previous work has provided solutions to the optimal mechanism design problem in broad multi-item settings~\cite{CaiDW12,AlaeiFHHM12,CaiDW12b}, but these solutions fall short of characterizing the computational complexity of the problem, and so do existing lower bound approaches~\cite{Briest08,PapadimitriouP11,DobzinskiFK11,DaskalakisDT12} (see broader discussion for the latter in Section~\ref{sec:related}). In this paper, we answer this question by showing that optimal mechanism design is computationally intractable, even in the most basic settings.


\begin{theorem} \label{thm:main}
There is no expected polynomial-time solution to the optimal mechanism design problem (formal definition in Section~\ref{sec:interpretation}) unless ${\tt ZPP} \supseteq {\tt P}^{\# {\tt P}}$.

Moreover, it is $\# {\tt P}$-hard to determine whether every optimal mechanism assigns a specific item to a specific type of bidder with probability $0$ or with probability $1$ \costasnote{(at the Bayesian Nash equilibrium of the mechanism),} given the promise that one of these two cases holds simultaneously for all optimal mechanisms.

The above are true even in the case of selling multiple items to a single additive, quasi-linear\footnote{A bidder is {\em quasi-linear} if her utility for purchasing a subset $S$ of the items at price $p_S$  is $v_S - p_S$, where $v_S$ is her value for $S$.} bidder, whose values for the items are independently distributed on two rational numbers with~rational~probabilities. \end{theorem}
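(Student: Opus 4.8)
The theorem claims #P-hardness of a very simple-looking mechanism design problem (one additive bidder, independent two-point item distributions). I need to reduce a #P-hard counting problem to the question of computing (an encoding of) an optimal mechanism, and also establish the promise-gap version about item allocation probabilities being 0 or 1. Let me think about how I'd attack this.

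**The core idea: optimal mechanism = optimal transport / flow.**

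For a single additive bidder with independent item values, the optimal (randomized) mechanism design problem is a linear program: variables are allocation probabilities and payments for each type, subject to incentive-compatibility (IC) and individual-rationality (IR) constraints, maximizing expected revenue. This LP has exponential size (2^n types). The key structural insight the abstract advertises is a "flow interpretation" of the LP solution with a supermodularity constraint. So my plan:

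First, I would write down the revenue-maximization LP and take its dual. By LP duality / complementary slackness, the optimal mechanism is characterized by an optimal dual solution, which by Myerson-type / Rochet-type arguments can be interpreted as a flow on the lattice of types (a "measure" that transports the type distribution in a way respecting the partial order on types). The supermodularity constraint is what ties the n independent coordinates together and makes the flow have product structure. I would first prove this flow characterization (this is probably a lemma stated earlier in the paper that I can invoke).

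Second — and this is where the hardness comes in — I'd engineer the two-point supports (the two rational values $a_i < b_i$ for each item $i$, and the probabilities $p_i$) so that the optimal flow, and hence the optimal mechanism's behavior on a particular type, encodes the answer to a #P-hard computation. The natural target is something like counting subsets with a given sum, or computing $\sum_S \prod_{i\in S} p_i [\text{some threshold condition on }\sum_{i\in S} v_i]$, i.e., a partition-function / Knapsack-counting quantity. The point: whether a specific item is allocated to a specific type with probability 0 or 1 flips depending on the sign of a difference of two such exponential sums, and that sign is #P-hard to determine.

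**The main obstacle and how to handle it.**

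The hard part will be controlling the optimal mechanism precisely enough. Optimal multi-item mechanisms are notoriously delicate — they can involve randomization, "bundling," menus with many options — so I can't just hand-wave about "the" optimal mechanism. I need the flow characterization to be tight enough that I can (a) guarantee the promised dichotomy (the allocation probability of the designated (item, type) pair is the same — 0 or 1 — across ALL optimal mechanisms), and (b) show this common value is determined by the sign of a #P-hard quantity. Concretely, I expect to choose parameters so that the LP is "generically" non-degenerate except at one hyperplane: on one side of a threshold, every optimal mechanism gives the item to the type (prob 1); on the other side, none does (prob 0); and deciding the side requires evaluating the #P-hard sum. I would prove (a) via a perturbation/uniqueness argument on the dual flow, and (b) by explicitly computing the relevant flow value as the targeted counting quantity.

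**Wrapping up the reduction.**

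Given the dichotomy, the second bullet of the theorem (#P-hardness of deciding prob-0 vs prob-1) is immediate: an oracle for it decides the sign of the #P-hard quantity. For the first bullet (no expected-poly-time solution unless ${\tt ZPP}\supseteq{\tt P}^{\#{\tt P}}$), I'd argue: any algorithm that outputs an encoding of an optimal mechanism implementable in expected poly time can be used (by running it and then sampling the mechanism's allocation on the designated type, amplifying by repetition) to decide the prob-0/prob-1 question with high probability, hence to solve a #P-hard problem in ZPP — more carefully, since one call suffices to pin down one bit and the promise gives a definite answer, a ${\tt P}^{\#{\tt P}}$ computation reduces to polynomially many such bit-extractions, placing ${\tt P}^{\#{\tt P}}$ in ${\tt ZPP}$. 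I'd finish by noting all constructed instances use only rational values and probabilities with polynomially many bits, so the reduction is genuinely polynomial-time, completing the proof.
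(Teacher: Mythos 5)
Your overall skeleton matches the paper's: an exponential-size revenue LP, a dual with a flow interpretation on the type lattice, a uniqueness argument so that a designated allocation probability is the same ($0$ or $1$) in every optimal mechanism, and then the observation that a single sample from any expected-poly-time implementation decides that bit, giving ${\tt ZPP} \supseteq {\tt P}^{\#{\tt P}}$. The closing argument (promise dichotomy $\Rightarrow$ hardness of any encoding, via the revelation principle and sampling) is essentially the paper's.

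The genuine gap is in the hardness-embedding step, which you leave as ``I expect to choose parameters so that\dots the side of a threshold requires evaluating a \#P-hard sum.'' Two things are missing. First, deciding the sign of one fixed ``difference of exponential sums'' is a single bit and is not, by itself, a \#P-hardness reduction; you never name the counting problem or exhibit how an oracle for the allocation bit recovers it. The paper needs an intermediate problem with a \emph{tunable threshold}: it reduces from {\sc LexRank} (the rank of a set among equal-size sets ordered by subset sum, shown \#P-hard via a parsimonious \#{\sc SubsetSum} reduction plus a padding trick with unit elements to extract $\mathrm{count}_{\mathcal W}(T,m)$ for each $m$), and encodes the $c_i$'s into the gaps $d_i$ (with $2^i$ terms for lexicographic tie-breaking) so that the dual min-cost flow fills lattice nodes in order of subset-sum cost. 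Second, and this is the technical heart you do not address: to make the designated bit equal $\mathbf{1}[lexr_{\mathfrak C}(S)\le k]$ for an \emph{arbitrary} $k$, one must place the saturation frontier of the canonical flow exactly at the $k$-th cheapest node of the right size. The paper does this by proving (Lemma~\ref{lem:choice of parameter}) that a single common probability $p$ can be computed in polynomial time with $O(n\log n)$ bits of precision, via continuity and explicit derivative bounds on an explicit function $f(p)$, so that precisely that node is partially saturated; partial saturation is also what drives the uniqueness of the optimal $u$ (complementary slackness) and, after verifying that $u(S)=\max\{cost(S^*)-cost(S),0\}$ is monotone and supermodular, the pullback from the relaxed LP to the true BIC/IR LP. Without an argument of this kind showing the threshold is both controllable by poly-size rational inputs and simultaneously forces uniqueness, your ``perturbation/non-degeneracy'' step is a placeholder and the reduction does not go through as sketched.
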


\costasnote{\noindent 
The contribution of our result is two-fold. First, it gives a definitive proof that approximation is {\em necessary} for revenue optimization beyond Myerson's single-item setting. Approximation has been heavily used in algorithmic work on the problem, but there has been no justification for its use, at least in simple settings that don't induce combinatorial structure in the valuations of the bidders (sub-modular, OXS, etc.), or the allocation constraints of the setting. Second, our result represents advancement in our techniques for showing lower bounds for optimal mechanism design. Despite evidence that the structure of the optimal mechanism is complicated even in simple settings (see Section~\ref{sec:techniques}), previous work has not been able to harness this evidence to obtain computational hardness results. Our approach, using duality theory and flows to narrow into a family of instances for which this is possible, provides a new paradigm for proving hardness results in optimal mechanism design and, we expect, outside of algorithmic game theory. Again we note that complexity creeps in not because we assume correlations in the item values (which can easily introduce combinatorial dependencies among them), or because we restrict attention to deterministic mechanisms (whose structure is readily combinatorial), but because our approach reveals combinatorial structure in the optimal mechanism, which can be exploited for a reduction. See further discussion in Sections~\ref{sec:techniques} and~\ref{sec:related}.}

\subsection{The Optimal Mechanism Design Problem}  \label{sec:interpretation}

\costasnote{To explain our result, we describe the {\em Optimal Mechanism Design (OMD) problem} more formally. As we are aiming for a broad lower bound we take the most general approach on the definition of the problem placing no constraints on the form of the sought after mechanism, or how this mechanism is meant to be described. Hence, our lower bounds apply to computing direct revelation mechanisms as well as any conceivable type of mechanism.}

\medskip \noindent  {\sc Input:} This consists of the names of the items and the bidders, the allocation constraints of the setting (specifying, e.g., that an item may be allocated to at most one bidder, etc.), and a probability distribution on the valuations, or {\em types}, of the bidders. The { type of a bidder} incorporates information about how much she values every subset of the items, as well as what utility she derives for receiving a subset at a particular price. For example, the type of an additive quasi-linear bidder can be encapsulated in a vector of values (one value per item). We won't make any assumptions about how the allocation constraints are specified. In general, these could either be hard-wired to a family of instances of the OMD problem, or provided as part of the input in a computationally meaningful way. For the purposes of our intractability results, the allocation constraints will be trivial, enforcing that we can only allocate at most one copy of each item, and we restrict our attention to  instances with precisely these allocation constraints. As far as the type distribution is concerned, we restrict our attention to  additive quasi-linear bidders with independent values for the items. So, for our purposes, the type distribution of a bidder is specified by specifying its marginal on each item. We assume that each marginal is given explicitly, as a list of the possible values for the item as well as the probabilities assigned to each value.\footnote{There are of course other ways to describe these marginals. For example, we may only have sample access to them, or we may be given a circuit that takes as input a value and outputs the probability assigned to that value. As our goal is to prove lower bounds, the assumption that the marginals are provided explicitly in the input only makes the lower bounds stronger.}


\smallskip \noindent {\sc Desired Output:} The goal is to compute a (possibly randomized) auction that optimizes, over all possible auctions, the expected revenue of the auctioneer, i.e. the expected sum of  prices paid by the bidders at the Bayes Nash equilibrium of the auction,\footnote{Informally {\em Bayesian Nash equilibrium} is the extension of {\em Nash equilibrium} to incomplete-information games, i.e. games where the utilities of players are sampled from a probability distribution. We won't provide a formal definition as it is quite involved and is actually not required for our lower bounds, which focus on the single-bidder case.

For the purposes of the problem definition though, we note that, if an auction has multiple Bayesian Nash equilibria, its revenue is not well-defined as it may depend on what Bayesian Nash equilibrium the bidders end up playing. So we would like to avoid such auctions given the uncertainty about their revenue. Again this complication won't be relevant for our results as all auctions we construct in our hardness proofs will have a unique Bayes Nash equilibrium.} where the expectation is taken with respect to the bidders' types, the randomness in their strategies (if any) at the Bayes Nash equilibrium, as well as any internal randomness that the auction uses. 

We note that there is a large universe of possible auctions with widely varying formats, e.g.~sealed envelope auctions, dynamic auctions, all-pay auctions, etc. And there could be different auctions with optimal expected revenue. As our goal is to prove robust intractability results for OMD, we take a general approach imposing no restrictions on the format of the auction, and no restrictions on the way the auction is encoded. The encoding should specify in a computationally meaningful way what actions are available to the bidders, how the items are allocated depending on the actions taken by the bidders, and what prices are charged to them, where both allocation and prices  could be outputs of a randomized function of the bidders' actions. In particular, a computationally efficient solution to OMD induces the following:
\begin{itemize}
\item[] \costasnote{ \smallskip \noindent {\sc Auction Computation\&Simulation:} A { computationally efficient solution} to a family ${\cal I}$ of OMD problems induces a pair of algorithms $\cal C$ and $\cal S$ satisfying the following:
\begin{enumerate}
\item {[{\em auction computation}]} ${\cal C}: {\cal I} \rightarrow {\cal E}$ is an expected polynomial-time algorithm mapping instances $I \in {\cal I}$ of the OMD problem to auction encodings ${\cal C}(I) \in {\cal E}$; e.g. ${\cal C}(I)$ may be ``second price auction'', or ``English auction with reserve price \$5'', etc.\label{tractability 1}
\item {[{\em auction simulation}]} ${\cal S}$ is an expected polynomial-time algorithm mapping instances $I \in {\cal I}$ of the OMD problem, encodings ${\cal C}(I)$ of the optimal auction for $I$, and realized types $t_1,\ldots,t_m$ for the bidders, to a sample from the (possibly randomized) allocation and price rule of the auction  encoded by ${\cal C}( I)$, at the Bayes Nash equilibrium of the auction when the types of the bidders are $t_1,\ldots,t_m$.\label{tractability 2}
\end{enumerate}}
\end{itemize}
\noindent Clearly, Property~\ref{tractability 1} holds because  computing the optimal auction encoding ${\cal C}(I)$ for an instance ${I}$ is assumed to be efficient. But why does there exist a simulator ${\cal S}$ as in Property~\ref{tractability 2}? 
Well, when the auction ${\cal C}({ I})$ is executed, then somebody (either the auctioneer, or the bidders, or both) need to do some computation: the bidders need to decide how to play the auction (i.e. what actions from among the available ones to use), and then the auctioneer needs to allocate the items and charge the bidders. In the special case of a direct Bayesian Incentive Compatible mechanism,\footnote{A mechanism is {\em direct} if the only available actions to a bidder is to report a type in the support of her type-distribution. A direct mechanism is {\em Bayesian Incentive Compatible} if it is a Bayes Nash equilibrium for every bidder to truthfully report her type to the auctioneer.} the bidders need not do any computation, as it is a Bayes Nash equilibrium strategy for each of them to truthfully report their type to the auctioneer. In this case, all the computation must be done by the auctioneer who needs to sample from the (possibly randomized) allocation and price rule of the mechanism given the bidders' reported types. In general (possibly non-direct, or multi-stage) mechanisms, both bidders and auctioneer may need to do some computation: the bidders need to compute their Bayes Nash equilibrium strategies given their types, and the auctioneer needs to sample from the (possibly random) allocation and price rule of the mechanism given the bidders' strategies. These computations must all be computationally efficient, as otherwise \costasnote{the execution of the auction ${\cal C}(I)$ would not be computationally efficient.} Hence an efficient solution must induce an efficient simulator ${\cal S}$. \costasnote{Notice, in particular, that ${\cal S}$ requires that the combined computation performed by bidders and auctioneer be computationally efficient. This is important as placing no computational restrictions on the bidder side easily leads to spurious ``efficient'' solutions to OMD, as discussed in Section~\ref{sec:powerfull bidders}.}

\smallskip In view of the above discussion, Theorem~\ref{thm:main}
 establishes that,  even in very simple special cases of the OMD problem, there does not exist  a pair $({\cal C}, {\cal S})$ of efficient auction computation and simulation algorithms, i.e. the optimal auction cannot be found computationally efficiently, or cannot be executed efficiently, or both. We note that our hardness result is subject to the assumption ${\tt ZPP} \not\supseteq {\tt P}^{\# {\tt P}}$ (rather than ${\tt P} \not\supseteq {\tt P}^{\# {\tt P}}$) solely because we prove lower bounds for randomized mechanisms. 
 
\costasnote{ 
 \begin{remark}[Hardness of BIC Mechanisms] {A lot of research on optimal mechanism design has focused on finding optimal Bayesian Incentive Compatible (BIC) mechanisms,  as focusing on such mechanisms costs nothing in revenue due to the direct revelation principle (see~\cite{AGTbook} and Section~\ref{sec:prelim}). As an immediate corollary of Theorem~\ref{thm:main} we obtain that it is {\tt \#P}-hard to compute the (possibly randomized) allocation and price rule of the optimal BIC mechanism (Corollary~\ref{cor:BIC hardness} in Section~\ref{sec:prelim}). However, Theorem~\ref{thm:main} is much broader, in two respects: a. in the definition of the OMD problem we impose no constraints on what type of auction should be found; b. we don't require an explicit computation of the (possibly randomized) allocation and price rule of the mechanism, but allow an expected polynomial-time algorithm that samples from the allocation~and~price~rule.}
  \end{remark}}

%
\subsection{Techniques} \label{sec:techniques}


There are serious obstacles in establishing intractability results for optimal mechanism design, the main one being that the structure of optimal mechanisms is poorly understood even in very simple settings. To prove Theorem~\ref{thm:main}, we need to find a family of mechanism design instances whose optimal solutions are sufficiently complex to enable reductions from some {\tt \#P}-hard problem, while at the same time are sufficiently restricted so that solutions to the {\tt \#P}-hard problem can actually be extracted from the structure of the optimal mechanism. However, there is no apparent structure in the optimal mechanism even in the simple case of a single additive bidder. 

To gain some intuition, it is worth pointing out that the optimal mechanism for selling multiple items to an additive bidder is not necessarily comprised of the optimal mechanisms for selling each item separately. Here is an example from Hart and Nisan~\cite{HartN12}. Suppose that there are two items and the bidder's value for each is either $1$ or $2$ with probability ${1 \over 2}$, independently of the other item. In this case, the maximum expected revenue from selling the items separately is $2$, achieved~e.g.~by posting a price of $1$ on each item. However, offering instead the bundle of both items at a price of $3$ achieves a revenue of $3 \cdot {3 \over 4} = {9 \over 4}.$

On the other hand, bundling the items together is not always better than selling them separately. If there are two items with values $0$ or $1$ with probability $1 \over 2$, independently from each other, then selling the bundle of the two items achieves revenue at most $3 \over 4$, but selling the items separately yields revenue of~$1$, which is optimal.

In general, the optimal mechanism may have much more intricate structure than either selling the grand bundle of all the items, or selling each item separately, even when the item values are i.i.d. In fact, the optimal mechanism might not even be deterministic: we may need to offer for sale not only sets of items, but also lotteries over sets of items. Here is an example. Suppose that there are two items, one of which is distributed uniformly in $\{1,2\}$ and the other uniformly in $\{1,3\}$, and suppose that the items are independent. In this case, the optimal mechanism offers the bundle of the two items at price $4$, and it also offers at price $2.5$ a lottery that with probability $1 \over 2$ gives both the items and with probability $1 \over 2$ just the first item.\footnote{That this is the optimal mechanism  follows from our techniques; see Section~\ref{sec:end lp theorem}.}

\smallskip Given that optimal mechanisms have no apparent structure, even in the simple case of an additive bidder with independent values for the items, the main challenge in proving our hardness result is pinning down a family of sufficiently interesting instances of the problem for which we can still characterize the form of the optimal mechanism. To do this we follow a principled approach starting with a folklore, albeit exponentially large, linear program for revenue optimization, constructing a relaxation of this LP, and showing that, in a suitable class of instances, the solution of the relaxed LP is also a solution to the original LP, which has rich enough structure to embed a {\tt \#P}-hard problem. In more detail, our approach is the following:

\begin{itemize}
\item In Section~\ref{LP1} we present LP1, the folklore, albeit exponentially large, linear program for computing a revenue optimal auction.
\item In Section~\ref{LP2} we relax the constraints of LP1 to construct a new, still exponentially large, linear program LP2. The solutions of the relaxed LP need not provide solutions to the original mechanism design problem. We prove however that an optimal LP2 solution is indeed an optimal LP1 solution if it happens to be monotone and supermodular.
\item In Section~\ref{LP3} we take LP3, the dual program to LP2. We interpret its solutions as solutions to a minimum-cost flow problem on a lattice.
\item In Section~\ref{canonicalsolution} we characterize a \emph{canonical solution} to a specific subclass of LP3 instances. This solution requires ordering of the subset sums of an appropriate set of integers.
\item In Section~\ref{solvingrelaxed} we use duality to convert a canonical LP3 solution to a unique LP2 solution. We are therefore able to characterize the unique solution for a variety of LP2 instances. 
\item In Section~\ref{lp2tolp1} we show that the LP2 solutions obtained above are also feasible and optimal for the corresponding LP1 instance. Thus, we gain the ability to characterize unique optimal solutions of a class of LP1 instances.
\item In Section~\ref{hardnessproof} we show how to encode a {\tt \#P}-hard problem into the class of LP1 instances that we have developed.
\end{itemize}

\subsection{Related Work}\label{sec:related} We have already provided background on optimal mechanism design in economics literature, as well as algorithmic solutions to the problem. We summarize the state of affairs in that exact, computationally efficient solutions are known for broad multi-item settings, as long as the bidder types are independent and their type-distributions are given explicitly. In particular, the runtime of such solutions is polynomial in the size of the support of each bidder's type-distribution, which may be exponential in the natural description complexity of the distribution, e.g., when the distribution is product over the items.

There has also been considerable effort towards computational lower bounds for the problem. Nevertheless, all known  results are for either somewhat exotic families of valuation functions or distributions over valuations, or for computing the optimal deterministic, as opposed to general (possibly randomized) auction. In the first vein, Dobzinski et al.~\cite{DobzinskiFK11} show that optimal mechanism design for OXS bidders is ${\tt NP}$-hard via a reduction from the {\sc Clique} problem. OXS valuations are described implicitly via a graph, include additive, unit-demand\footnote{A {\em unit-demand} bidder is described by a vector $(v_1,\ldots,v_n)$ of values, where $n$ is the number of items. If the bidder receives item $i$, her value is $v_i$, while if she receives a set of more than one item, her value for that set is her value for her favorite item in the set.} and much broader classes of valuations, and are more amenable to lower bounds given the combinatorial nature of their definition. (See further discussion in Section~\ref{beyondadditive}.) In very recent work~\cite{DaskalakisDT12} the authors of the present paper prove {\tt \#P}-hardness of the optimal mechanism design problem for a single item and a single bidder whose value for the item is the sum of independently distributed attributes.
Compared to these lower bounds, our goal in this paper is instead to prove intractability results for very simple valuations (namely additive) and distributions over them (namely the item values are independent and the distribution of each item is given explicitly), and which have no combinatorial structure incorporated in their definition.

On the complexity of optimal deterministic auctions, Briest \cite{Briest08} shows  inapproximability results for selling multiple items to a single unit-demand bidder via an item-pricing auction,\footnote{An {\em item-pricing auction} posts a price for each item, and lets the bidder buy whatever item s/he wants. It is clearly a deterministic auction as there is no randomness used by the auctioneer. A randomized auction for this setting could also be pricing lotteries over items.} when the bidder's values for the items are correlated according to some explicitly given distribution. More recently, Papadimitriou and Pierrakos~\cite{PapadimitriouP11} show ${\tt APX}$-hardness results for the optimal, incentive compatible, deterministic auction when a single item is sold to multiple bidders, whose values for the item are correlated according to some explicitly given distribution. (We note that the settings considered in both Briest~\cite{Briest08} and Papadimitriou-Pierrakos~\cite{PapadimitriouP11} are polynomial-time solvable via linear programming when the determinism requirement is removed~\cite{DobzinskiFK11,DW12}.) Finally, in~\cite{DaskalakisDT12} the authors of the present paper provide {\sc SqrtSum}-hardness results for the optimal item-pricing problem when there is a single unit-demand bidder whose values for the items are independent of support two, and when either the values or the probabilities may be irrational. Compared to these results, our lower bounds apply to general (i.e. possibly randomized) auctions, the values of the items are distributed independently, and both supports and probabilities are rational numbers.

\subsection{Further Results and Discussion} In addition to our main theorem, Theorem~\ref{thm:main}, we show that, beyond additive valuations, computational intractability arises for very simple settings with submodular valuations, namely even for a single, budget-additive,\footnote{A {\em budget-additive} bidder is described by a value-vector $(v_1,\ldots,v_n)$, where $v_i$ is her value for item $i \in [n]$, and a budget $B$ so that her value for a subset $S$ of items is $\min\{\sum_{i \in S} v_i,B\}$.} quasi-linear bidder whose values for the items are perfectly known, but there is uncertainty about her budget. This is presented as Theorem~\ref{thm:main2} in Section~\ref{sec:budget additive bidders}. 

Interestingly, the hard instances constructed in the proof of Theorem~\ref{thm:main2} have trivial indirect mechanism descriptions, but require {\tt NP} power for the bidder to determine what to buy. We show that this phenomenon is quite general, discussing how easily implementable, optimal, indirect mechanisms may be trivial to compute, if the bidders are assumed sufficiently  powerful computationally. This is discussed in Section~\ref{sec:powerfull bidders}, and further justifies our framework (in Section~\ref{sec:interpretation}) for studying the complexity of optimal mechanism design.

\section{Preliminaries}\label{sec:prelim}

Throughout this paper, we restrict our attention to instances of the mechanism design problem where a seller wishes to sell a set $N = \{1,2, \ldots, n\}$ of items to a single additive quasi-linear bidder whose values for the items are independent of support $2$. In particular, the bidder values item $i$ at $a_i$, with probability $1-p_i$, and at $a_i + d_i$, with probability $p_i$, independently of the other items, where $a_i$, $d_i$, and $p_i$ are positive rational numbers. If she values $i$ at $a_i$, we say that her value for $i$ is ``low'' and, if she values it at $a_i+d_i$, we say that her value is ``high.'' The specification of the instance comprises the numbers $\{a_i,d_i,p_i\}_{i=1}^n$. 

A mechanism $\cal M$ for an instance ${ I}$ as above specifies a set ${\cal A}$ of actions available to the bidder together with a rule mapping each action $\alpha \in {\cal A}$ to a (possibly randomized) allocation $A_{\alpha} \in \{0,1\}^n$, specifying which items the bidder gets, and a (possibly randomized) price \costasnote{$\tau_{\alpha} \in \mathbb{R}$} that the bidder pays, where $A_\alpha$ and $\tau_\alpha$ could be correlated. Facing this mechanism, a bidder whose values for the items are instantiated to some vector $\vec{v} \in \times_i \{a_i, a_i+d_i\}$ chooses any action in $\arg \max_{\alpha \in {\cal A}}\{ \vec{v} \cdot \mathbb{E}(A_\alpha) - \mathbb{E}(\tau_\alpha)\}$ or any distribution on these actions, as long as the maximum is non-negative, since $\vec{v} \cdot \mathbb{E}(A_\alpha) - \mathbb{E}(\tau_\alpha)$ is the expected utility of the bidder for choosing action $\alpha$.\footnote{If the maximum utility under $\vec{v}$ is negative, the bidder would ``stay home.'' To ease notation, we can include in ${\cal A}$ a special action ``stay home'' that results in the bidder getting nothing and paying nothing. If all other actions give negative utility the bidder could just use this special action.} In particular, any such choice is a {\em Bayesian Nash equilibrium} behavior for the bidder, in the degenerate case of a single bidder that we consider.\footnote{As mentioned earlier, we opt not to define Bayesian Nash equilibrium for multi-bidder mechanisms as this definition is involved and irrelevant for our purposes.} If for all vectors $\vec{v}$ there is a unique optimal action $\alpha_{\vec{v}} \in {\cal A}$ in the above optimization problem, then mechanism ${\cal M}$ induces a mapping from valuations $\vec{v} \in \times_i \{a_i, a_i+d_i\}$ to (possibly randomized) allocation and price  pairs $(A_{\alpha_{\vec{v}}}, \tau_{\alpha_{\vec{v}}})$. If there are $\vec{v}$'s with non-unique maximizers, then we break ties in favor of the action with the highest $\mathbb{E}(\tau_\alpha)$ and, if there are still ties, lexicographically beyond that.\footnote{{We can enforce this tie-breaking with an arbitrarily small hit on revenue as follows: For all $\alpha$, we decrease the (possibly random price) $\tau_\alpha$ output by ${\cal M}$ by a well-chosen amount---think of it as a rebate---which gets larger as $\mathbb{E}(\tau_\alpha)$ gets larger. We can choose these rebates to be sufficiently small so that they only serve the purpose of tie-breaking. These rebates won't affect our lower bounds.}}

In Section~\ref{sec:interpretation} we explained in detail what it means to solve an instance ${ I}=\{a_i,d_i,p_i\}_{i=1}^n$ computationally efficiently. In short, the solution needs to encode the action set ${\cal A}$ as well as the mapping from actions $\alpha \in {\cal A}$ to $(A_\alpha,\tau_\alpha)$, in a way that given an instantiated type $\vec{v} \in \times_i \{a_i, a_i+d_i\}$ we can computationally efficiently sample from $A_{\alpha_{\vec{v}}}$ and from $\tau_{\alpha_{\vec{v}}}$.

\smallskip It is convenient for our arguments to first study {\em direct} mechanisms, where the action set available to the bidder coincides with her type space $\times_i \{a_i, a_i+d_i\}$. In this case, we can equivalently think of the actions available to the bidder as declaring any subset $S \subseteq N$, where the correspondence between subsets and value vectors is given by $\vec v(S) = \sum_{i \in N} a_i e_i + \sum_{i \in S} d_i e_i$ where $e_i$ is the unit vector in dimension $i$; i.e.~$S$ represents the items whose values are high. For all actions $S \subseteq N$, the mechanism induces a vector $\vec q(S) \in [0,1]^n$ of probabilities that the bidder receives each item and an expected price $\tau(S) \in \mathbb{R}$ that the bidder pays. The expected utility of a bidder of type $\vec{v}(S)$ for choosing action $S'$ is given by $\vec v(S) \cdot \vec q(S') - \tau(S')$. We denote by $u(S)=\vec v(S) \cdot \vec q(S) - \tau(S)$ her expected utility for reporting her true type. The following are important properties of direct mechanisms.

\begin{definition}
A direct mechanism for the family of single-bidder instances we consider in this paper is Bayesian Incentive Compatible (BIC) if the bidder cannot
benefit by misreporting the set of items he values highly. Formally:
\[\forall S,T\subseteq N: \vec v(S) \cdot \vec q(S) - \tau(S) \ge \vec v(S) \cdot \vec q(T) - \tau(T).\]
Or, equivalently:
\[\forall S,T\subseteq N: u(S) \ge u(T) + (\vec v(S) - \vec v(T)) \cdot \vec q(T).\]
\end{definition}

\begin{definition}
The mechanism is individually rational (IR) if $u(S) \ge 0$, for all $S \subseteq N$.
\end{definition}

\costasnote{We note that an immediate consequence of our proof of Theorem~\ref{thm:main} is the following.
 \begin{corollary} \label{cor:BIC hardness}
Given an instance $I=\{a_i,d_i,p_i\}_{i=1}^n$ of the optimal mechanism design problem (i.e. the same setting as Theorem~\ref{thm:main}), it is {\tt \#P}-hard to compute circuits (or any other implicit but efficient to evaluate description of) $q: \times_i \{a_i, a_i+d_i\} \rightarrow [0,1]^n$ and $\tau: \times_i \{a_i, a_i+d_i\} \rightarrow \mathbb{R}$ inducing a revenue-optimal, BIC, IR mechanism.
 \end{corollary}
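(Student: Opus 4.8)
The plan is to obtain Corollary~\ref{cor:BIC hardness} as an essentially immediate consequence of the {\tt \#P}-hardness part of Theorem~\ref{thm:main}, by combining the revelation principle with the trivial observation that a circuit (or any implicit but efficiently evaluable description) can be \emph{evaluated} in polynomial time. The only real content is bookkeeping about which mechanisms the promise of Theorem~\ref{thm:main} refers to.

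First I would set up the reduction. Theorem~\ref{thm:main} furnishes a {\tt \#P}-hard problem: given an instance $I=\{a_i,d_i,p_i\}_{i=1}^n$ (together with a designated item $i^\star$ and a designated type $\vec v^\star \in \times_i\{a_i,a_i+d_i\}$ produced by that reduction), under the promise that all revenue-optimal mechanisms for $I$ assign $i^\star$ to $\vec v^\star$ with probability $0$, or all of them assign it with probability $1$, decide which case holds. Suppose we are given a procedure ${\cal A}$ that, on input $I$, outputs circuits for $q : \times_i\{a_i,a_i+d_i\}\rightarrow[0,1]^n$ and $\tau : \times_i\{a_i,a_i+d_i\}\rightarrow\mathbb{R}$ defining a revenue-optimal, BIC, IR direct mechanism. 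The reduction simply runs ${\cal A}(I)$, evaluates the returned circuit for $q$ at input $\vec v^\star$, and outputs its $i^\star$-th coordinate.

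To argue correctness I would use two facts. By the revelation principle (Section~\ref{sec:prelim}), the maximum revenue over all mechanisms of all formats equals the maximum revenue over BIC, IR direct mechanisms; hence the mechanism ${\cal A}$ returns is a revenue-optimal mechanism in the sense of Theorem~\ref{thm:main}, so the promise applies to it. Second, in a direct BIC mechanism the bidder's unique Bayes Nash equilibrium is truthful reporting, so the equilibrium allocation to type $\vec v^\star$ is exactly $\vec q(\vec v^\star)$, with no tie-breaking subtleties; thus $q_{i^\star}(\vec v^\star)\in\{0,1\}$ under the promise and equals the common answer. Since the reduction makes one call to ${\cal A}$ plus one polynomial-size circuit evaluation, it runs in (expected) polynomial time, which yields {\tt \#P}-hardness of computing such circuits, and an honestly efficient ${\cal A}$ would imply ${\tt ZPP} \supseteq {\tt P}^{\# {\tt P}}$ (or ${\tt P} \supseteq {\tt P}^{\# {\tt P}}$ in the deterministic case).

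The hard part is not a calculation but a conceptual check: one must make sure that an ``optimal BIC, IR direct mechanism'' is a legitimate member of the class of ``all optimal mechanisms'' over which Theorem~\ref{thm:main}'s promise is stated---this is exactly where the revelation principle enters---and that evaluating the returned circuit at $\vec v^\star$ reports the on-equilibrium allocation probability rather than an off-equilibrium or tie-broken quantity. Both points are handled by the observations above; everything else is routine.
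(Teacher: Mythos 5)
Your proposal is correct and follows essentially the same route as the paper: the paper obtains the corollary by observing that on its hard instances (which admit a unique optimal BIC, IR direct mechanism with $q_{i^*}(S^*)\in\{0,1\}$), any efficiently evaluable circuits for $q,\tau$ could be evaluated at the special type to decide whether $q_{i^*}(S^*)=0$, which is {\tt \#P}-hard. Your reduction is exactly this, just phrased via the promise statement of Theorem~\ref{thm:main} together with the revelation principle rather than via the uniqueness of the optimal direct mechanism on the hard family.
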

\noindent Note that, for a meaningful lower bound, we cannot ask for $q(S), \tau(S)$ for all $S$, as there are too many $S$'s---namely $2^n$. Instead we need to ask for some implicit but computationally meaningful description of them, such as in the form of circuits, which can be evaluated on an input $S$ in time polynomial in their size and the number of bits required to describe $S$---if we don't require that $q$ and $\tau$ can be evaluated efficiently on any given $S$ we would allow for trivial solutions such as ``$I$ is itself an implicit description of the optimal $q$ and $\tau$ for $I$.'' We conclude with the following remark.}

\begin{remark}
For the single-bidder instances we consider in this paper, a direct mechanism that is Bayesian Incentive Compatible is also {\em Incentive Compatible} and vice versa.\footnote{{For the reader who is not familiar with these concepts here is a brief explanation. A direct multi-bidder mechanism is {Incentive Compatible} if, for all bidders $i$ and all $t_i$, it is optimal for bidder $i$ to truthfully report $t_i$ no matter what the other bidders report. It is a stronger notion than {Bayesian Incentive Compatibility} which requires that, for all bidders $i$ and all $t_i$, it is optimal for bidder $i$ to truthfully report $t_i$, if the other bidders also truthfully report their types in expectation with respect to their realized types. I.e.~the latter concept requires that it is Bayesian Nash equilibrium for every bidder to truthfully report their type, while the former concept requires that it is a Dominant Strategy equilibrium for every bidder to truthfully report their type. Clearly, the two concepts collapse if there is just one bidder.}} As all our hardness results are for single-bidder instances, they simultaneously show the intractability of computing optimal Bayesian Incentive Compatible as well as optimal Incentive Compatible mechanisms.
\end{remark}
\costasnote{\subsection{Our Proof Plan} To prove Theorem~\ref{thm:main} we narrow into a family of single-bidder instances (of the form defined in the beginning of this section) for which there is a unique optimal BIC, IR, direct mechanism, and this mechanism satisfies the following: for a special item $i^*$ and a special type $S^*$, $q_{i^*}(S^*) \in \{0,1\}$ but it is {\tt \#P}-hard to decide whether $q_{i^*}(S^*) = 0$. Our approach for narrowing into this family of instances was outlined in Section~\ref{sec:techniques} and is provided in detail in Sections~\ref{sec:LP} through~\ref{hardnessproof}. These sections establish two hardness results: First, they show Corollary~\ref{cor:BIC hardness}  that it is {\tt \#P}-hard to compute circuits for $q$ and $\tau$, as if we could compute such circuits in polynomial time then we would also be able to answer whether $q_{i^*}(S^*) = 0$ in polynomial time. Second, they establish Theorem~\ref{thm:main} restricted to BIC mechanisms. Indeed, if there were expected polynomial-time algorithms ${\cal C}$, ${\cal S}$ (see Section~\ref{sec:interpretation}), then we would plug in type $S^*$ and get one sample from the allocation rule for that type. Notice that this sample will allocate item $i^*$ if and only if $q_{i^*}(S^*) = 1$, which is {\tt \#P}-hard to decide. Details are provided in Section~\ref{hardnessproof}.

Finally, it is straightforward to translate this hardness result to general mechanisms by making the following observation, called the ``Revelation Principle''~\cite{AGTbook}:
\begin{itemize}
\item Any (possibly non-direct) mechanism ${\cal M}$ has an equivalent BIC, IR, direct mechanism ${\cal M'}$ so that the two mechanisms induce the exact same mapping from types $\vec{v}$ to (possibly randomized) allocation and price pairs. Indeed, as explained above, given the (possibly randomized) allocation rule $A$ and price rule $\tau$ of ${\cal M}$ we can define the mapping $\vec{v} \mapsto (A_{\alpha_{\vec{v}}}, \tau_{\alpha_{\vec{v}}})$. This mapping is in fact itself a BIC, IR direct mechanism ${\cal M'}$. Clearly, ${\cal M}$ and ${\cal M}'$ have the same expected revenue.

\end{itemize}
As mentioned above, the hard instances we narrow into in our proof satisfy that their optimal BIC, IR direct mechanism is unique and it is a {\tt \#P}-hard problem to  tell whether $q_{i^*}(S^*) = 0$ or $1$ for a special item~$i^*$ and a special type $S^*$. The above observation implies that, for any instance in our family, any (possibly non-direct) optimal mechanism ${\cal M}$ for this instance needs to induce an optimal direct mechanism. Since the latter is unique, ${\cal M}$ needs to give the same (possibly randomized) allocation to type $S^*$ that the unique direct mechanism does. As argued above, getting one sample from this allocation allows us to decide whether $q_{i^*}(S^*) = 1$, a {\tt \#P}-hard problem. As any efficient solution to our family of OMD instances would allow us to get samples from the allocation rule of an optimal mechanism in expected polynomial-time, we would be able to solve a {\tt \#P}-hard problem in expected polynomial-time. This establishes Theorem~\ref{thm:main} (for unrestricted mechanisms).}



\section{A Linear Programming Approach} \label{sec:LP}
Our goal in Sections~\ref{sec:LP} through~\ref{hardnessproof} is showing that it is computationally hard to compute a BIC, IR direct mechanism that maximizes the seller's expected revenue, even in the single-bidder setting introduced in Section~\ref{sec:prelim}. In this section, we define three exponential-size linear programs which are useful for zooming into a family of hard instances that are also amenable to analysis. Our LPs are defined using the notation introduced in Section~\ref{sec:prelim}.
\subsection{Mechanism Design as a Linear Program}\label{LP1}
The optimal BIC and IR mechanism for the family of single-bidder instances introduced in Section~\ref{sec:prelim} can be found by solving the following linear program, which we call LP1:

\begin{figure}[h!]
\begin{center}
\fbox{\begin{minipage}{13cm}
\[
\textrm{max }  E_S[ \vec v(S) \cdot \vec q(S) - u(S) ]
\]
\[
	\begin{array}{lcr}
	\textrm{subject to :} & \\
	\textrm{$\forall S,T\subseteq N$ : } & u(S) \ge u(T) + (\vec v(S) - \vec v(T)) \cdot \vec q(T) & \textrm{ (BIC)}\\
	\textrm{$\forall S\subseteq N$ : } & u(S) \ge 0  & \textrm{ (IR)}\\
	\textrm{$\forall S\subseteq N, i \in N$ : } & 0 \le q_i(S) \le 1 & \textrm{ (PROB)}
	\end{array}
\]
\end{minipage}}
\caption{LP1, the linear program for revenue maximization}
\end{center}
\end{figure}
\vspace{-10pt}\noindent Notice that the expression $\vec v(S) \cdot \vec q(S) - u(S)$ in the objective function equals the price $\tau(S)$ that the bidder of type $S$ pays when reporting $S$ to the mechanism. The expectation is taken over all $S \subseteq N$, where the probability of set $S$ is given by $p(S) = \prod_{i \in S}p_i \cdot \prod_{j \notin S}(1-p_j)$. We notice that this program has exponential size (variables and constraints).

\subsection{A Relaxed Linear Program}\label{LP2}

We now remove constraints from LP1 and perform further simplifications, making the program easier to analyze. Later on we identify a subclass of instances where optimal solutions to the relaxed program induce optimal solutions to the original program {(see Lemma~\ref{lem:supermodulariy})}.

As a first step, we relax LP1 by considering only BIC constraints that correspond to neighboring types (types that differ in one element). We also drop the constraint that the probabilities $q_i(S)$ are non-negative:

\[
\textrm{max }  E_S[ \vec v(S) \cdot \vec q(S) - u(S) ]
\]
\[
	\begin{array}{lcr}
	\textrm{subject to :} & \\
	\textrm{$\forall S\subseteq N, i \notin S$ : } & u(S \cup \{i\}) \ge u(S) + d_i q_i(S) & \textrm{ (BIC1)}\\
	\textrm{$\forall S\subseteq N, i \notin S$ : } & u(S) \ge u(S \cup \{i\}) - d_i q_i(S \cup \{i\}) & \textrm{ (BIC2)}\\
	\textrm{$\forall S\subseteq N$ : } & u(S) \ge 0  & \textrm{ (IR)}\\
	\textrm{$\forall S\subseteq N, i \in N$ : } &  q_i(S) \le 1 & \textrm{(PROB')}
	\end{array}
\]

Since the coefficient of every $q_i(S)$ in the objective is strictly positive, no $q_i(S)$ can be increased in any optimal solution without violating a constraint. 
We therefore conclude the following about $q_i(S)$:
\begin{itemize}
\item If $i \in S$, then $q_i(S)$ is only upper-bounded by constraint PROB', and thus $q_i(S) = 1$ in every optimal solution.
\item If $i \notin S$, then $q_i(S) = \min\{1, \frac {u(S \cup \{i\}) - u(S)} {d_i}\}$ from (BIC1) and (PROB'). Furthermore, from (BIC2) we have   $ \frac {u(S \cup \{i\}) - u(S)} {d_i} \le q_i(S \cup \{i\}) = 1$, and thus $q_i(S) = \frac {u(S \cup \{i\}) - u(S)} {d_i}$.
\end{itemize}

So the program becomes (after setting $q_i(S) = 1$ whenever $i \in S$, removing the constant terms from the objective, and tightening the constraints (BIC1) to equality):
\[
\textrm{max }  E_S\left[ \sum_{i \notin S} v_i(S) q_i(S) - u(S) \right]
\]
\[
	\begin{array}{lcr}
	\textrm{subject to :} & \\
	\textrm{$\forall S\subseteq N, i \notin S$ : } & q_i(S) = \frac {u(S \cup \{i\}) - u(S)} {d_i}  & \textrm{ (BIC1')}\\
	\textrm{$\forall S\subseteq N, i \notin S$ : } & u(S \cup \{i\}) - u(S)  \le d_i &  \textrm{ (BIC2)} \\
	\textrm{$\forall S\subseteq N$ : } & u(S) \ge 0 & \textrm{ (IR)} \\
	\textrm{$\forall S\subseteq N, i \notin S$ : } & q_i(S) \le 1 & \textrm{(PROB')}

	\end{array}
\]

\noindent Notice that the constraint (PROB') is trivially satisfied as a consequence of (BIC1') and (BIC2).

We now rewrite the objective, substituting $q_i(S)$ according to (BIC1') and noting that $v_i(S) = a_i$ for $i \notin S$:

\begin{eqnarray*}
&E_S&\left[ \sum_{i \notin S} a_i \frac {u(S\cup\{i\}) - u(S)} {d_i} - u(S) \right] \\ 
&=& E_S \left[   u(S)\left(  -1 - \sum_{i \notin S} \frac{a_i}{d_i} + \sum_{i \in S}\left( \frac{a_i}{ d_i}\cdot \frac{ p(S \setminus \{i\})}{ p(S)}  \right)  \right) \right]
\end{eqnarray*}
obtained by grouping together all coefficients of $u(S)$, adjusting by the appropriate probabilities.

We now note that $\frac{p(S \setminus \{i\})}{p(S)} = -1 +\frac{1}{p_i} $, and our objective becomes:
\[  E_S\left[ u(S)\left( -1 - \sum_{i \in N}\frac{a_i}{d_i} +  \sum_{i \in S} \frac {a_i} {p_i d_i}  \right) \right].\]


We now perform a change of notation so that the program takes a simpler form. In particular, we set 
$$B \leftarrow \kappa\left(1 + \sum_{i \in N} \frac {a_i} {d_i}\right); \qquad x_i \leftarrow \frac {\kappa a_i} {p_i d_i}$$ 
where $\kappa$ is some positive constant, and the objective becomes\\
$\frac{1}{\kappa}E_S[ (\sum_{i \in S} x_i - B) u(S) ]$. Since $1/\kappa$ is constant, we are lead to study the following program, LP2:
\begin{figure}[h!]
\begin{center}
\fbox{\begin{minipage}{13cm}

\[
\max_u E_S[ (\sum_{i \in S} x_i - B) u(S) ]
\]
\[
	\begin{array}{lcr}
	\textrm{subject to :} & \\
	\textrm{$\forall S\subseteq N, i \notin S$ : } & u(S \cup \{i\}) - u(S)  \le d_i &  \textrm{ (BIC2)} \\
	\textrm{$\forall S\subseteq N$ : } & u(S) \ge 0 & \textrm{ (IR)} \\
	\end{array}
\]
\end{minipage}}
\caption{LP2, the relaxed linear program}
\end{center}
\end{figure}

In constructing LP2, our new constants $B$ and $x$ were a function of $\vec{p}$, $\vec{a}$, $\vec{d}$, and a newly introduced constant $\kappa$. We note that, by adjusting $\kappa$, we are able to obtain a wide range of relevant $B$ and $x$ values.
\begin{lemma}\label{newvariables}
For any  $B$, $\vec{x}$, $\vec{p}$ and $\vec{d}$ such that $B > \sum_{i \in N} p_i x_i$, there exist (efficiently computable)  $\vec{a}$ and $\kappa$ such that $B = \kappa(1 + \sum_{i \in N} \frac {a_i} {d_i})$ and $x_i =  \frac {\kappa a_i} {p_i d_i}$. If  $B$, $\vec{x}$, $\vec{p}$ and $\vec{d}$ are rational, then $\vec{a}$ and $\kappa$ are rational as well.
\end{lemma}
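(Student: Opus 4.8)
The plan is to solve the two equations explicitly for $\vec a$ and $\kappa$, treating $\kappa$ as a free scaling parameter that we fix last. From the second equation, $x_i = \kappa a_i / (p_i d_i)$, we immediately get $a_i = x_i p_i d_i / \kappa$. Substituting this into the first equation, $B = \kappa(1 + \sum_i a_i/d_i)$, yields
\[
B = \kappa\left(1 + \sum_{i \in N} \frac{x_i p_i d_i}{\kappa d_i}\right) = \kappa + \sum_{i \in N} p_i x_i,
\]
so the first equation forces $\kappa = B - \sum_{i \in N} p_i x_i$. This is where the hypothesis $B > \sum_{i \in N} p_i x_i$ is used: it guarantees $\kappa > 0$, as required (recall $\kappa$ was introduced as a positive constant). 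With $\kappa$ thus pinned down, set $a_i := x_i p_i d_i / \kappa$ for each $i$; these are positive since $x_i, p_i, d_i > 0$ and $\kappa > 0$, so they are valid instance parameters.

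It then remains only to verify that this choice satisfies both defining equations, which is immediate by construction: the second equation holds by the definition of $a_i$, and the first holds because we chose $\kappa$ precisely to make it hold (the computation above runs in reverse). Rationality is clear: if $B, \vec x, \vec p, \vec d$ are rational, then $\kappa = B - \sum_i p_i x_i$ is a rational combination of rationals, hence rational, and each $a_i = x_i p_i d_i/\kappa$ is a ratio of rationals with nonzero denominator, hence rational. Efficient computability is likewise immediate, as each quantity is obtained by a constant number of arithmetic operations on the inputs.

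There is essentially no obstacle here — the lemma is a bookkeeping statement asserting that the change of variables performed in deriving LP2 is invertible over the relevant parameter range. The only point requiring any care is confirming that the single strict inequality in the hypothesis is exactly what is needed to keep $\kappa$ positive (and hence the derived instance well-defined and the division by $\kappa$ legitimate); everything else is direct substitution.
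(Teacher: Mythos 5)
Your proposal is correct and matches the paper's own proof exactly: both solve for $\kappa \leftarrow B - \sum_{i \in N} p_i x_i$ and $a_i \leftarrow p_i d_i x_i / \kappa$, with the hypothesis $B > \sum_{i \in N} p_i x_i$ ensuring $\kappa > 0$. Your write-up merely spells out the verification, rationality, and positivity points that the paper leaves implicit.
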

\begin{proof}
We want that $x_i = \frac{ \kappa a_i } {p_i d_i}$ and $ B = \kappa + \sum_{i \in N} \frac{ \kappa a_i } {d_i} = \kappa + \sum_{i \in N} p_i x_i.$
Indeed, these equalities follow from setting
\[ \kappa \leftarrow B - \sum_{i \in N} p_i x_i; \qquad a_i \leftarrow \frac{ p_i d_i x_i } {\kappa}. \]
\end{proof}


\subsection{The Dual of the Relaxed Program, and its Min-Cost Flow Interpretation}\label{LP3}

To characterize the structure of optimal solutions to LP2, we use linear programming duality. Consider LP3, LP2's dual program, which has a (flow) variable $f_{S \cup \{i\} \rightarrow S}$ for every set $S$ and $i \notin S$.
\begin{figure}[h!]
\begin{center}
\fbox{\begin{minipage}{13cm}
\[
\min_f \sum_S \sum_{i \notin S} f_{S \cup \{i\} \rightarrow S} d_i
\]
\[
	\begin{array}{lcr}
	\textrm{subject to :} & \\
	\textrm{$\forall S\subseteq N$ : } & -\sum_{i \notin S} f_{S \cup \{i\} \rightarrow S} + \sum_{i \in S} f_{S \rightarrow S \setminus \{i\}} \ge p(S) \left( \sum_{i \in S} x_i - B \right) \\
	\textrm{$\forall S\subseteq N, i \notin S$ : } & f_{S \cup \{i\} \rightarrow S} \ge 0 \\
	\end{array}
\]
\end{minipage}}
\caption{LP3, the dual of LP2}
\end{center}
\end{figure}

\vspace{-10pt}We interpret LP3 as a minimum-cost flow problem on a lattice. Every node on the latice corresponds to a set $S \subseteq N$, and flow may move downwards from $S$ to $S \setminus \{i\}$ for each $i \in S$. The variable $f_{S \rightarrow S \setminus \{i\}}$ represents the amount of flow sent this way, and the cost of sending each unit of flow along this edge is $d_i$.

For nodes $S$ with $ p(S) \left( \sum_{i \in S} x_i - B \right) \geq 0$, we have an external source supplying the node with at least this amount of flow. We call such a node ``positive.''  Nodes with 
$ p(S) \left( \sum_{i \in S} x_i - B \right) < 0$, which we call ``negative,'' can deposit at most   $|  p(S) \left( \sum_{i \in S} x_i - B \right) |$ to an external sink.

Since $d_i > 0$ for all $i$, an optimal solution to LP3 will have net imbalance exactly $ p(S) \left( \sum_{i \in S} x_i - B \right)$ for each positive node $S$.

\section{Characterizing the Linear Programming Solutions} \label{sec:characterization}

For the remainder of the paper, we restrict our attention to the case where $N$ is the only positive node in LP3. We notice that there is a feasible solution if and only if \[
p(N) \left(\sum_{i \in N} x_i - B\right) \le - \sum_{S \subsetneq N}  p(S) \left(\sum_{i \in S} x_i - B\right)\]
which occurs when
$E_S[ \sum_{i \in S} x_i - B ] \le 0.$ Since the components of $S$ are chosen independently, the program is feasible precisely when
$\sum_{i \in N} p_i x_i - B \le 0.$

\subsection{The Canonical Solution to LP3}\label{canonicalsolution}

When there is a single positive node, we can easily construct an optimal solution to LP3 as follows. Define the cost of each node $S$ to be $cost(S) = \sum_{i \in N \setminus S} d_i$, and order the negative nodes $S_1, S_2, \ldots$ in non-decreasing order of cost {(and lexicographically if there are ties)}. We greedily send flow to the negative nodes in order, moving to the next node only when all previous nodes have been saturated. (The flow can be sent along any path to the node, since all such paths have the same cost.) We stop when a net flow of $p(N) \left( \sum_{i \in N} x_i - B \right)$ has been absorbed by the negative nodes we sent flow to. We call this the \emph{canonical solution} to LP3, and notice that the canonical solution is the unique optimal solution to LP3 up to the division of flow between equal cost nodes.

\subsection{From LP3 to LP2 Solutions}\label{solvingrelaxed}

We now show how to use a canonical solution to LP3 to construct a solution to LP2.
In most instances, this solution is unique.
%
%
%
%

\begin{lemma}\label{uniqueness}
Let $S^*$ be the highest-cost negative node which absorbs non-zero flow in the canonical solution $f$  of LP3, and suppose that $S^*$ is not fully saturated by $f$. Then the utility function $u(S) = \max \{ cost(S^*) - cost(S), 0 \}$ is the unique optimal solution to LP2.
\end{lemma}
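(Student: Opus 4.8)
The plan is to use strong linear programming duality between LP2 and LP3 together with complementary slackness. First I would verify that the proposed $u(S) = \max\{cost(S^*) - cost(S), 0\}$ is feasible for LP2: since $cost(S) = \sum_{i \notin S} d_i$, adding an element $i \notin S$ decreases cost by exactly $d_i$, so $cost(S \cup \{i\}) = cost(S) - d_i$; hence $u(S \cup \{i\}) - u(S) \le d_i$ always holds (the left side is at most $d_i$ when both terms are on the positive branch, and even smaller otherwise), giving (BIC2), and $u \ge 0$ by the $\max$ with $0$, giving (IR). Next I would check that the objective value of this $u$ equals the cost of the canonical flow $f$. By strong duality it then suffices to exhibit this equality, which I would get by verifying complementary slackness between $u$ and $f$: (i) for every edge $S \cup \{i\} \to S$ carrying positive flow, the corresponding primal constraint (BIC2) is tight, i.e. $u(S\cup\{i\}) - u(S) = d_i$; and (ii) for every set $S$ with $u(S) > 0$ — equivalently $cost(S) < cost(S^*)$ — the dual constraint at $S$ is tight.

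For (i): the canonical flow only moves flow from $N$ ``downward'' toward negative nodes of cost at most $cost(S^*)$, and every path it uses consists of edges $T \cup \{i\} \to T$ with $cost(T) \le cost(S^*)$ (otherwise we would be routing through a strictly-higher-cost region, contradicting that all such routing paths have equal cost $cost(S^*) - cost(N) = \sum_{i \in N}d_i - \dots$, and that the greedy solution saturates in cost order). Actually the cleaner way: any edge $T\cup\{i\}\to T$ on a flow-carrying path satisfies $cost(T) \le cost(S^*)$, hence $cost(T\cup\{i\}) = cost(T) - d_i \le cost(S^*)$ too, so both endpoints are on the positive branch of the $\max$ and $u(T\cup\{i\}) - u(T) = (cost(S^*) - cost(T\cup\{i\})) - (cost(S^*) - cost(T)) = d_i$. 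For (ii): a set $S$ with $u(S) > 0$ has $cost(S) < cost(S^*)$, so $S$ is strictly cheaper than $S^*$; since the canonical solution saturates all negative nodes of cost $< cost(S^*)$ before reaching $S^*$, if $S$ is negative its dual constraint is tight by saturation, and if $S$ is positive (only $S = N$) the dual constraint is tight because $d_i > 0$ forces net imbalance exactly $p(N)(\sum x_i - B)$ at the source, as already observed after LP3. This establishes optimality of $u$.

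The remaining and main work is \emph{uniqueness}. Here the hypothesis that $S^*$ is the highest-cost negative node with nonzero flow and is \emph{not} fully saturated is essential. Suppose $u'$ is another optimal LP2 solution. By complementary slackness with the canonical flow $f$ (which is an optimal dual solution), $u'$ must satisfy all the tightness conditions forced by the support of $f$: for every flow-carrying edge $T \cup \{i\} \to T$ we need $u'(T\cup\{i\}) - u'(T) = d_i$, and for every node $S$ with $cost(S) < cost(S^*)$ that is not fully absorbed... — more to the point, $u'$ must vanish wherever the dual constraint is slack. The key structural fact is that because $S^*$ is not fully saturated, the dual constraint at $S^*$ is slack in $f$, which forces $u'(S^*) = 0$; and because every negative node of cost $\le cost(S^*)$ other than $S^*$ is either saturated or receives no flow, and $N$ is the unique positive node, one can propagate the equalities $u'(T\cup\{i\}) - u'(T) = d_i$ along flow paths from $N$ down to the saturated negative nodes to pin down $u'(S) = cost(S^*) - cost(S)$ for all $S$ with $cost(S) \le cost(S^*)$, while the IR-tightness forced at the unsaturated boundary gives $u'(S) = 0 = \max\{cost(S^*)-cost(S),0\}$ for $cost(S) \ge cost(S^*)$. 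I would organize this propagation by the lattice structure: every set $S$ lies on some monotone chain from $\emptyset$ to $N$, and I can reach any node of cost $\le cost(S^*)$ from $N$ by a flow-carrying path (this is where ``$S^*$ not fully saturated'' combined with the cost-ordering of the greedy construction guarantees the flow support is exactly the down-set of nodes of cost $< cost(S^*)$ plus partial flow into $S^*$). The subtle point to get right — and the main obstacle — is showing that the flow support is rich enough (spans enough edges) to pin down $u'$ on \emph{all} of the lattice and not just on the flow-carrying region; this is exactly what fails when $S^*$ is fully saturated, since then one could have extra slack elsewhere, and it is why the hypothesis is stated the way it is.
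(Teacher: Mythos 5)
Your proposal is correct and follows essentially the same route as the paper: complementary slackness between LP2 and the canonical LP3 flow, propagating tight (BIC2) equalities along flow-carrying paths from $N$, and using the slack dual constraint at the partially saturated $S^*$ to force $u(S^*)=0$ and hence $u(N)=cost(S^*)$, so that $u(S)=\max\{cost(S^*)-cost(S),0\}$ everywhere. The only (harmless) imprecision is your description of the flow support when several negative nodes share the cost $cost(S^*)$ --- equal-cost nodes preceding $S^*$ lexicographically are also saturated --- but for such nodes $cost(S^*)-cost(S)=0$, so the conclusion is unaffected, exactly as the paper notes.
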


\begin{proof}
Consider an arbitrary optimal LP2 solution $u$. We will use linear programming complementarity to prove that $u$ is uniquely determined by the canonical solution $f$.

 For any node $S$ that receives nonzero flow in $f$, there is a path $N=S_0, S_1, S_2, \ldots, S_k = S$ from $N$ to $S$ that has positive flow along each edge. By complementarity, the (BIC2) inequalities corresponding to these edges in the primal program are tight {in $u$}. That is, for all $i=1,\ldots, k$, we have $u(S_{i-1}) - u(S_i) = d_x$, where $x$ is the unique element of $S_{i-1} \setminus S_i$. So, for any $S$ which receives nonzero flow in $f$:
$$u(N) - u(S) = \sum_{i \in N \setminus S}d_i = cost(S).$$
For all nodes $S'$ which are not fully saturated in $f$ (i.e. $S^*$ as well as all nodes which receive no flow), $u(S')$ must be 0 in $u$ by complementarity, since the corresponding LP3 constraints are not tight. In particular, since $S^*$ receives flow but is not fully saturated, we have {$u(S^*)=0$ and hence}:
$$u(N)= u(N) - u(S^*) = cost(S^*).$$
Therefore, any node $S$ which receives flow in $f$ must have $u(S) = u(N)-cost(S) = cost(S^*) - cost(S)$.

Furthermore, a node $S$ always receives flow in $f$ if its cost is less than $cost(S^*)$, and receives no flow if its cost is  greater than $cost(S^*)$. {Moreover, those nodes $S$ with $cost(S)=cost(S^*)$ either receive no flow in which case $u(S)=0$, or receive flow in which case $u(S)=cost(S^*) - cost(S)=0$.} Thus, we have shown that for any node $S$, $u(S) = \max \{ cost(S^*) - cost(S), 0 \}$. It is easy to verify that this utility function satisfies all the constraints of LP2.
\end{proof}

If the highest cost node $S^*$ to receive flow in $f$ is fully saturated, then the utility function described above is still an optimal LP2 solution. However, in this case, if the cheapest unfilled node in $f$ has strictly greater cost than $S^*$, then the optimal primal solution is not unique.

\subsection{From LP2 to LP1 Solutions}\label{lp2tolp1}

We now show that, in certain cases, a solution to LP2 allows us to obtain a solution to LP1 where  $\vec{p}$, $\vec{a}$ and $\vec{d}$ are as in Lemma~\ref{newvariables}.

\begin{lemma}\label{lem:supermodulariy}
{Suppose $B > \sum_{i\in N} p_i x_i$ and an optimal solution $u$ to LP2 is monotone and supermodular. Then there is some $q$ such that $(u,q)$ is an optimal solution to LP1 where $\vec{p}$, $\vec{a}$, $\vec{d}$ are as in Lemma~\ref{newvariables}. If $u$ is the unique optimal solution to LP2, then $(u,q)$ is the unique optimal LP1 solution.}
\end{lemma}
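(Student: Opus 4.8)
The plan is to recall where LP2 came from: it was obtained from LP1 by (i) keeping only the neighboring-type (BIC1) and (BIC2) constraints, (ii) dropping the lower bound $q_i(S)\ge 0$, and (iii) substituting $q_i(S)=1$ for $i\in S$ and $q_i(S)=\frac{u(S\cup\{i\})-u(S)}{d_i}$ for $i\notin S$. So, given the optimal LP2 solution $u$, I would \emph{define} $q$ by exactly these formulas: $q_i(S)=1$ for $i\in S$, and $q_i(S)=\frac{u(S\cup\{i\})-u(S)}{d_i}$ for $i\notin S$. The objective value of $(u,q)$ in LP1 then equals (up to the fixed positive rescaling by $1/\kappa$ and the additive constants that were removed, all tracked in Section~\ref{LP2}) the LP2 objective value of $u$. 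Since LP2 is a relaxation of LP1 (it has a subset of the constraints, after the same substitutions), any LP1-feasible point has LP1-value at most the LP2-optimum; hence once I show $(u,q)$ is LP1-feasible it is automatically LP1-optimal. The same correspondence is a bijection on optimal solutions with this $q$-substitution, which will give the uniqueness statement: if $u$ is the unique LP2 optimum, then any LP1 optimum $(u',q')$ restricts to an LP2-feasible $u'$ of the same value, forcing $u'=u$, and then LP1-optimality together with the objective having strictly positive coefficient on each $q'_i(S)$ forces $q'_i(S)$ to be maximal subject to the constraints, i.e.\ $q'=q$ by the argument already given in Section~\ref{LP2}.

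The heart of the proof is therefore checking LP1-feasibility of $(u,q)$, i.e.\ verifying (PROB) $0\le q_i(S)\le 1$, (IR) $u(S)\ge 0$, and the \emph{full} (BIC) family $u(S)\ge u(T)+(\vec v(S)-\vec v(T))\cdot\vec q(T)$ for all $S,T$. The (IR) constraints hold because $u$ is LP2-feasible. For (PROB): the upper bound $q_i(S)\le 1$ for $i\notin S$ is the (BIC2) constraint of LP2, which $u$ satisfies; the lower bound $q_i(S)\ge 0$, which was \emph{dropped} in LP2, is exactly where monotonicity of $u$ is used — $u(S\cup\{i\})\ge u(S)$ gives $q_i(S)\ge 0$. (And $q_i(S)=1\in[0,1]$ when $i\in S$.) The remaining and main task is the full BIC family. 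Here I would use supermodularity. The key structural fact to establish is that $q_i(S)$ is monotone nondecreasing in $S$: for $i\notin S\subseteq T$ (with $i\notin T$), supermodularity of $u$ gives $u(T\cup\{i\})-u(T)\ge u(S\cup\{i\})-u(S)$, hence $q_i(T)\ge q_i(S)$; combined with $q_i(S)\le 1 = q_i(S')$ for $i\in S'$, this yields $q_i(S)\le q_i(T)$ for all $S\subseteq T$ and every $i$. This ``monotone allocation rule'' property is the classical sufficient condition (a Myerson/Rochet-type cyclic-monotonicity argument) for the local BIC constraints to imply all global BIC constraints.

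Concretely, to get from local to global BIC I would argue along a monotone path. Fix $S,T$; I want $u(S)-u(T)\ge(\vec v(S)-\vec v(T))\cdot\vec q(T)$. Write the symmetric-difference transition from $T$ to $S$ as a sequence of single-element steps. Using the tightened (BIC1$'$) equalities $u(R\cup\{i\})-u(R)=d_i q_i(R)$ for adding an element (these hold with equality in LP2 after the substitution) and the (BIC2) inequalities $u(R\cup\{i\})-u(R)\le d_i=d_i q_i(R\cup\{i\})$ for the effect of removing an element, I can telescope $u(S)-u(T)$ into a sum of terms $\pm d_i q_i(\cdot)$, and then compare term-by-term with $(\vec v(S)-\vec v(T))\cdot\vec q(T)=\sum_i (v_i(S)-v_i(T)) q_i(T)$, where $v_i(S)-v_i(T)$ is $+d_i$ on $S\setminus T$, $-d_i$ on $T\setminus S$, and $0$ elsewhere. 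The inequality then reduces, coordinate by coordinate, to monotonicity of $q_i$ along the chosen path (each intermediate set lies between the relevant endpoints), which is exactly the monotone-allocation fact proved from supermodularity. I expect this path/telescoping argument — making sure the comparison goes the right way on both the ``add'' steps (equality, but evaluated at the smaller set) and the ``remove'' steps (inequality, the $q\le 1$ slack) — to be the main technical obstacle; everything else is bookkeeping of the substitutions from Section~\ref{LP2} and the feasibility-implies-optimality observation coming from LP2 being a relaxation.
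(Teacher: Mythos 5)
Your proposal matches the paper's proof essentially step for step: the same definition of $q$ from $u$, the same feasibility-implies-optimality logic via the relaxation of Section~\ref{LP2}, monotonicity of $u$ for the dropped constraints $q_i(S)\ge 0$, the (BIC2) constraints for $q_i(S)\le 1$, supermodularity to get monotonicity of $\vec q$, a telescoping path argument for the global (BIC) constraints, and the same uniqueness-via-relaxation conclusion. The only detail to pin down in your path argument is the one the paper makes explicit: route the path from $T$ up to $S\cup T$ first (adding all of $S\setminus T$, so every intermediate set contains $T$ and monotonicity of $\vec q$ gives $q_j(\cdot)\ge q_j(T)$) and only then account for the elements of $T\setminus S$ via (BIC2), using $q_i(T)=1$ for those $i$.
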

\begin{proof}
{We set
\[ q_i(S) =  \left\{
	\begin{array}{ll}
		1,  & \mbox{if } i \in S; \\
		\frac {u(S\cup\{i\}) - u(S)} {d_i}, & \mbox{otherwise.}
	\end{array}
\right. \]
With this choice, as explained in Section~\ref{LP2}, $(u,q)$ is an optimal solution to a relaxation of LP1. So to establish optimality of $(u,q)$ for LP1 it suffices to show that $(u,q)$ satisfies all the constraints of LP1.} 

We first notice that the (IR) constraints are satisfied, since  $u(S) \ge 0$ for all $S$ in LP{2}.

We now show that the (PROB) constraints are satisfied. Indeed, if $i \in S$, then $q_i(S) = 1$. If $i \not\in S$, then $q_i(S) \geq 0$ follows from monotonicity of $u$. The inequality $q_i(S) \leq 1$ follows from constraint (BIC2) of LP2.

Finally, we show that the (BIC) constraints of LP1 are satisfied.
By supermodularity of $u$ we have that for all $S$,  all $i \notin S$ and  all $j\neq i$:
\[ u(S\cup\{i\}\cup\{j\}) - u(S \cup\{j\}) \ge u(S\cup\{i\}) - u(S). \]
Dividing by $d_i$ we obtain $ q_i(S\cup\{j\}) \ge q_i(S)$ for all $i \not\in S$ and $j \neq i$. Since the inequality is trivially satisfied if $i \in S$ (since both sides are 1), or $j = i$ (since $q_i(S \cup \{i\}) = 1$) we conclude that $\vec{q}$ is monotone.

Now pick any distinct subsets $S, T \subseteq N$. We must show that:
\[ u(S) \ge u(T) + (\vec v(S) - \vec v(T)) \cdot \vec q(T).\]
Consider an ordering $i_1, i_2, \ldots, i_k$ of the elements of $T \setminus S$ and an ordering $j_1, j_2, \ldots, j_\ell$ of the elements of $S \setminus T$.

By (BIC2), we know that, for all $r = 1, \ldots, k$:
$$u\left(S \cup \bigcup_{t=1}^{r} \{i_t\}\right) \leq u\left(S \cup \bigcup_{t=1}^{r-1}\{i_t\}\right) + d_{i_r}.$$
Summing over $r$ and cancelling terms, we conclude $u(S \cup T) \leq u(S) + \sum_{r=1}^k d_{i_r}$.

From our definition of $\vec{q}$ it follows that for all $r = 1, \ldots, \ell$:
$$u\left(T \cup \bigcup_{t=1}^{r} \{j_t\}\right) = u\left(T \cup \bigcup_{t=1}^{r-1}\{j_t\}\right) + d_{j_r}q_{j_r}\left(T \cup \bigcup_{t=1}^{r-1}\{j_t\}\right).$$
By monotonicity of $\vec{q}$, it follows that 
$$u\left(T \cup \bigcup_{t=1}^{r} \{j_t\}\right) \geq u\left(T \cup \bigcup_{t=1}^{r-1}\{j_t\}\right) + d_{j_r}q_{j_r}(T).$$
Summing over $r$, we conclude that $u(S \cup T) \geq u(T) + \sum_{r=1}^\ell d_{j_r}q_{j_r}(T). $

Combining this with our earlier upper bound for $u(S \cup T)$, we conclude that
$$u(S) \geq u(T)  + \sum_{r=1}^\ell d_{j_r}q_{j_r}(T) -  \sum_{r=1}^k d_{i_r}.$$
Since $q_{i_r}(T) = 1$ for all $r$, we have
$$u(S) \geq u(T) + \sum_{j \in S \setminus T}d_j q_j(T) - \sum_{i \in T \setminus S}d_i q_i(T)$$
and thus the (BIC) constraint of LP1 is satisfied.

If $u$ is the unique optimal solution to LP2, then {the $(u,q)$ constructed as above is the unique optimal solution to LP1, as it is the unique optimal solution of a relaxation of LP1.}\end{proof}

\subsection{Putting it All Together} \label{sec:end lp theorem}

In summary, we have shown that if the canonical solution of LP3 has a partially saturated node $S^*$, then LP2 has a unique optimal solution, namely $u(S) = \max\{ cost(S^*)-cost(S),0\}$. Since this utility function is monotone and supermodular, it {also defines} a unique optimal solution of the corresponding LP1 instance.

\vspace{-5pt}\begin{corollary}\label{maincor}
Let $S^*$ be the highest-cost negative node which absorbs non-zero flow in the canonical solution of LP3, and suppose that $S^*$ is not fully saturated. Then the original mechanism design problem with $\vec{p}$, $\vec{a}$ and $\vec{d}$ as in Lemma~\ref{newvariables} has a unique optimal solution, and the utility of a player of type $N$ in this solution is $ cost(S^*)$.
\end{corollary}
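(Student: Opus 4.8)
The plan is to obtain the corollary by chaining together Lemma~\ref{uniqueness} and Lemma~\ref{lem:supermodulariy}: Lemma~\ref{uniqueness} hands us the explicit unique optimal LP2 solution under exactly the hypothesis we are given, and Lemma~\ref{lem:supermodulariy} upgrades it to the unique optimal LP1 (hence mechanism-design) solution, provided we check that this explicit $u$ is monotone and supermodular. So the proof has essentially three parts: invoke Lemma~\ref{uniqueness}; verify the two structural properties of $u(S)=\max\{cost(S^*)-cost(S),0\}$; invoke Lemma~\ref{lem:supermodulariy} and read off $u(N)$.

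First I would apply Lemma~\ref{uniqueness}: since by hypothesis $S^*$ is the highest-cost negative node receiving nonzero flow in the canonical solution of LP3 and is not fully saturated, the unique optimal LP2 solution is $u(S)=\max\{cost(S^*)-cost(S),0\}$. Then I would verify monotonicity and supermodularity. Writing $D=\sum_{i\in N}d_i$, we have $cost(S)=\sum_{i\in N\setminus S}d_i=D-\sum_{i\in S}d_i$, so $h(S):=cost(S^*)-cost(S)=cost(S^*)-D+\sum_{i\in S}d_i$ is a modular set function whose marginal for adding any $j\notin S$ is the constant $d_j>0$; in particular $h$ is nondecreasing, and so is $u=\max\{h,0\}$. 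For supermodularity, fix $S\subseteq T$ and $j\notin T$. Because the marginal of $h$ is the constant $d_j$, we have $u(S\cup\{j\})-u(S)=g(h(S))$ where $g(t):=\max\{t+d_j,0\}-\max\{t,0\}$, and one checks directly that $g$ is nondecreasing on $\R$ (it is $0$ for $t\le -d_j$, equals $t+d_j$ on $[-d_j,0]$, and equals $d_j$ for $t\ge0$), the point being that this holds \emph{because} $d_j>0$. Since $h(S)\le h(T)$, this gives $u(S\cup\{j\})-u(S)\le u(T\cup\{j\})-u(T)$, the increasing-marginal-differences form of supermodularity. This verification — that $\max\{h,0\}$ of a modular function with positive weights is supermodular — is the only genuine (if minor) content in the proof, and it is exactly where positivity of the $d_i$ enters.

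Finally, with $u$ monotone, supermodular, and the unique optimal LP2 solution, and with $B>\sum_{i\in N}p_ix_i$ (part of the Lemma~\ref{newvariables} setup referenced in the statement; moreover a partially saturated $S^*$ already forces this strict inequality, since $B=\sum_ip_ix_i$ would make the canonical solution saturate every negative node), Lemma~\ref{lem:supermodulariy} produces a $q$ making $(u,q)$ the unique optimal LP1 solution for the $\vec p,\vec a,\vec d$ of Lemma~\ref{newvariables}. Since LP1 computes the optimal BIC, IR mechanism, which is determined by $(q,\tau)$ with $\tau(S)=\vec v(S)\cdot\vec q(S)-u(S)$, uniqueness of $(u,q)$ is uniqueness of the optimal mechanism. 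For the utility of type $N$: $cost(N)=\sum_{i\in N\setminus N}d_i=0$, so $u(N)=\max\{cost(S^*),0\}=cost(S^*)$, the outer $\max$ being inactive since $S^*$ is a negative node (hence $S^*\neq N$, so $N\setminus S^*\neq\emptyset$) and $cost(S^*)$ is a nonempty sum of positive $d_i$'s. This gives the corollary.
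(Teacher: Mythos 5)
Your proposal is correct and follows essentially the same route as the paper: Section~\ref{sec:end lp theorem} obtains the corollary by exactly this chaining of Lemma~\ref{uniqueness} and Lemma~\ref{lem:supermodulariy}, noting that $u(S)=\max\{cost(S^*)-cost(S),0\}$ is monotone and supermodular and that $cost(N)=0$ gives $u(N)=cost(S^*)$. The only difference is that you spell out the monotonicity/supermodularity verification (via the modularity of $cost$ and positivity of the $d_i$), which the paper simply asserts.
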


\section{Proof of Theorem~1: Hardness Of Mechanism Design}\label{hardnessproof} 

We use the results of the previous section to establish the computational hardness of optimal mechanism design. Our reduction is from the lexicographic rank problem, which we show to be {\tt \#P}-hard.

\begin{definition}[{\sc LexRank} problem]
Given a collection  $\mathfrak{C}= \{c_1, \ldots, c_n\}$ of positive integers and a subset $S \subseteq \{1,\ldots, n
\}$, we define the \emph{lexicographic rank} of $S$,  denoted $lexr_\mathfrak{C}(S)$, by
\begin{align*}
lexr_{\mathfrak{C}}(S) \triangleq \vline \bigg\{ S' &:   |S'| = |S| \textrm{ and }   \\
& \left( \sum_{i \in S'} c_i < \sum_{j \in S}c_j \textrm{ or }  \left(\sum_{i \in S'} c_i = \sum_{j \in S}c_j \textrm{ and } S' \leq_{\text{lex}} S \right) \right) \bigg\} \vline
\end{align*}
where $S' \leq_{\text{lex}} S$ is with respect to the lexicographic ordering.\footnote{To be precise, we say that $S_1 \leq_{\text lex} S_2$ iff the largest element in the symmetric difference $S_1 \triangle S_2$ belongs to $S_2$.}
The {\sc LexRank} problem is: Given $\mathfrak{C}$, $S$, and an integer $k$, determine whether or not $lexr_\mathfrak{C}(S) \leq k$.
\end{definition}

\subsection{Hardness of {\sc LexRank}}

We now show that the  {\sc LexRank} problem  is  {\tt \#P}-hard by a reduction from \#-{\sc SubsetSum}.

\begin{definition}[\#-{\sc SubsetSum} problem]
Given a collection $\mathcal{W} = \{w_1,\ldots,w_n\}$ of positive integers and a target integer $T$, compute the number of subsets
$S \subseteq \{1,\ldots, n\}$ such that $\sum_{i \in S} w_i \le T$.
\end{definition}
The \#-\textsc{SubsetSum} problem is known to be {\tt \#P}-hard. Indeed, the reduction from {\sc SAT} to {\sc SubsetSum} as presented in \cite{Sipser2006} is parsimonious.

\begin{lemma}
 {\sc LexRank} is {\tt \#P}-hard.
\end{lemma}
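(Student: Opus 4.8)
The plan is to reduce \#-\textsc{SubsetSum} to \textsc{LexRank}, so suppose we are given a \#-\textsc{SubsetSum} instance $\mathcal{W} = \{w_1,\ldots,w_n\}$ with target $T$, and we want to count the subsets $S$ with $\sum_{i\in S} w_i \le T$. The obstacle in directly using $\textsc{LexRank}$ is that the lexicographic rank is defined only among subsets \emph{of the same cardinality} as $S$, whereas \#-\textsc{SubsetSum} counts subsets of all cardinalities. So the first and main step is a padding trick that collapses the cardinality constraint: I would introduce a fresh set of weights so that every subset of the original ground set can be ``completed'' to a subset of a fixed target cardinality without changing (or in a controlled way changing) its weight-sum ordering.

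Concretely, here is the construction I would carry out. Build a new collection $\mathfrak{C}$ on ground set $[2n]$: for $i\in[n]$ set $c_i = (n+1)\,w_i + 1$ — actually, more simply, keep the magnitudes large enough to dominate and let $c_i$ encode $w_i$ — and for the $n$ ``padding'' elements $i\in\{n+1,\ldots,2n\}$ set $c_i$ to a common small value $\delta$ (or to values that are tiny compared to the $w_i$'s, e.g. $c_{n+i}=1$ after scaling the originals up by a large factor). Take the query set to be $S = \{1,\ldots,n\}$ itself, which has cardinality $n$, so that $\textsc{LexRank}$ compares $S$ against all $n$-element subsets $S'\subseteq[2n]$. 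Any such $S'$ splits as $S' = A \cup P$ where $A\subseteq[n]$ is the ``real'' part and $P$ is a padding part, with $|A|+|P| = n$; crucially, for each choice of $A$ there is exactly one way to choose $|P| = n-|A|$ padding elements that is lexicographically smallest (the bottom ones), but I actually want a \emph{bijection}, so I would instead fix $P$ to always be a \emph{specific} block depending only on $|A|$ — this is where the lexicographic tie-breaking in $\textsc{LexRank}$ does the work for me, since among $n$-subsets with the same real part $A$ and same total weight, lexicographic order picks out a canonical representative. After scaling so that the $w_i$ dominate the padding weights, $\sum_{i\in S'} c_i < \sum_{j\in S} c_j$ holds essentially iff $\sum_{i\in A} w_i$ is small enough, i.e. iff $A$ is a ``light'' subset in the \#-\textsc{SubsetSum} sense (with the padding contributing a predictable correction term that I can absorb by choosing the target comparison value, i.e. by comparing against $S=[n]$ with its own weight, or by adjusting the $c_i$).

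Once the encoding is in place, the argument is: $lexr_{\mathfrak{C}}(S)$ counts, with the lexicographic tie-break selecting one $n$-subset per real part $A$, exactly the number of $A\subseteq[n]$ whose weighted sum lies below the threshold — which is $\big|\{A\subseteq[n] : \sum_{i\in A} w_i \le T'\}\big|$ for the appropriate $T'$ derived from the padding and scaling. Then a single call to $\textsc{LexRank}$ with input $(\mathfrak{C}, S, k)$ answers ``is $lexr_{\mathfrak{C}}(S)\le k$''; by binary search over $k\in\{0,1,\ldots,2^n\}$ (polynomially many calls), we recover the exact value of $lexr_{\mathfrak{C}}(S)$, hence the \#-\textsc{SubsetSum} count. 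Since \#-\textsc{SubsetSum} is $\#\mathtt{P}$-hard (the \textsc{SAT}-to-\textsc{SubsetSum} reduction is parsimonious), this establishes that $\textsc{LexRank}$ is $\#\mathtt{P}$-hard.

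The step I expect to be the main obstacle is getting the padding/scaling to work \emph{cleanly}: I need the cardinality-$n$ requirement of $\textsc{LexRank}$ to impose no extra restriction beyond what I want, the padding weights to be negligible relative to the real weights so they never flip a strict inequality but to break ties in a way consistent with the lexicographic rule so that each real subset $A$ is counted exactly once, and all the resulting integers $c_i$ (and $k$) to be of polynomial bit-length. I would be careful to make the $c_i$ for $i\in[n]$ large (multiply the $w_i$ by, say, $2^{n+2}$) and set the padding $c_i$'s to distinct small powers or to $1$'s so that the lexicographic comparison among same-weight, same-real-part subsets has a unique minimum that coincides with the ``canonical padding'' I chose; verifying this bijection is bijective and that the threshold arithmetic lands exactly on $T$ is the delicate bookkeeping, but it is routine once the scale separation is set up.
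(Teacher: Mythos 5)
Your plan has the right reduction target (\#-\textsc{SubsetSum}) and the right general idea of padding to defeat the fixed-cardinality requirement, but two steps are genuinely broken. First, you misread what $lexr_{\mathfrak{C}}(S)$ counts: it is the \emph{number of all} subsets $S'$ with $|S'|=|S|$ whose sum is smaller (or equal with $S'\leq_{\text{lex}} S$); the lexicographic comparison is only a tie-break against the query set $S$, and it does \emph{not} ``pick out a canonical representative'' among the $S'$ themselves. So with $n$ interchangeable padding elements, every real part $A\subseteq[n]$ of size $m$ is counted $\binom{n}{n-m}$ times (once per choice of padding completion), and a single \textsc{LexRank} query gives you only one linear combination $\sum_m \binom{n}{n-m}\,\mathrm{count}_{\mathcal W}(T',m)$ of the per-cardinality counts. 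You cannot extract the unweighted sum $\sum_m \mathrm{count}_{\mathcal W}(T,m)$ (the \#-\textsc{SubsetSum} answer) from that one number. Second, your choice of query set $S=\{1,\ldots,n\}$ fails to encode the target $T$ at all: since the padding weights are negligible, the comparison threshold becomes (essentially) $\sum_{i\in[n]} w_i$, so the rank you compute is independent of $T$. Saying the correction is ``absorbed by choosing $T'$'' does not help, because nothing in your construction lets you place the threshold at $T$.

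The paper's proof repairs exactly these two points. It puts a dedicated heavy element $c_{n+1}=4nT+2n$ \emph{inside} the query set (with real weights scaled to $4n w_i$ and unit padding weights), so that the sum of the query set sits just above $4nT$ and the comparison genuinely tests $\sum_{i\in A} w_i \le T$; and instead of one instance it builds $n$ instances $\mathfrak{C}_1,\ldots,\mathfrak{C}_n$ with $0,1,\ldots,n-1$ padding elements, whose ranks satisfy $lexr_{\mathfrak{C}_\ell}(S_\ell)=1+\sum_{m=1}^{\ell}\mathrm{count}_{\mathcal W}(T,m)\binom{\ell-1}{\ell-m}$. This triangular system is then solved iteratively ($\ell=1$ gives $\mathrm{count}(T,1)$, then $\ell=2$ gives $\mathrm{count}(T,2)$, etc.), with each rank recovered by binary search on the \textsc{LexRank} oracle. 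To salvage your write-up you would need both ingredients: an element in the query set carrying the target $T$, and either multiple padded instances (as in the paper) or some other way to invert the binomial multiplicities introduced by the padding.
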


\begin{proof}
Given an oracle for the  {\sc LexRank} problem, it is straightforward to do binary search to compute the lexicographic rank of a set $S$. We will prove hardness of $\textsc{LexRank}$ by reducing the  \#-\textsc{SubsetSum} problem to the computation of lexicographic ranks of a collection of sets.

Let $(\mathcal{W}, T)$ be an instance of \#-\textsc{SubsetSum}, where $\mathcal{W} =  \{w_1,\ldots,w_n\}$ is a collection of positive integers and $T$ is a target integer.
We begin by defining, for $m = 1,\ldots, n$:
$$\textrm{count}_\mathcal{W}(T,m) \triangleq \left| \left\{ S \subseteq \{1,\ldots, n\} : |S| = m \textrm{ and }  \sum_{i \in S}t_i \leq T \right\} \right|. $$
Note that the number of subsets of $\mathcal{W}$ which sum to at most $T$ is simply $\sum_{m=1}^n \textrm{count}_\mathcal{W}(T,m)$. So it suffices to compute $\textrm{count}_\mathcal{W}(T,m)$ for all $m$.

To do this, we define $n$ different collections $\mathfrak{C}_1, \ldots, \mathfrak{C}_n$, where $\mathfrak{C}_{\ell} = \{ c^\ell_1, \ldots, c^\ell_{n+\ell}\}$ is given by:
$$
c^\ell_i = 
\begin{cases}
4nw_i & \mbox{if } 1 \leq i \leq n\\
4nT + 2n & \mbox{if } i = n+1\\
1 & \mbox{if } n+2 \leq i \leq n+\ell.
\end{cases}
$$
We also define a special set $S_\ell \triangleq \{n+1, n+2, \ldots, n+\ell\}$. Notice that $\sum_{i \in S_\ell}c^\ell_i = 4nT + 2n + \ell \costasnote{-1}$. 
Furthermore, for every subset $S \subseteq \{1,\ldots, n\}$, we have
$$\sum_{i \in S} c_i^\ell = 4n \sum_{i \in S} w_i.$$
Hence, for all $\emptyset \neq S \subseteq \{1,\ldots, n\}$:
\begin{enumerate}
\item if $\sum_{i \in S} w_i > T$, then  $\sum_{i \in S} c_i^\ell > \sum_{j \in S_\ell} c_j^\ell $; 

\item if $\sum_{i \in S} w_i \leq T$, then for all $U \subseteq \{n+2, n+3, \ldots, n+\ell\}$ we have  $\sum_{i \in S \cup U} c_i^\ell < \sum_{j \in S_\ell} c_j^\ell $;

\item for all $U \subseteq \{n+2, n+3, \ldots, n+\ell\}$, $\sum_{i \in S \cup U \cup \{n+1\}} c_i^\ell > \sum_{j \in S_\ell} c_j^\ell $.

\end{enumerate}

\smallskip \noindent Given that $|S_\ell| = \ell$ the above imply
\begin{align*}
lexr_{\mathfrak{C}_\ell}\left( S_\ell \right) = 1 + \sum_{m=1}^{\ell}\left( \textrm{count}_\mathcal{W}(T,m)\cdot \binom{\ell - 1}{\ell - m}\right).
\end{align*}

Suppose we have an oracle which can compute the lexicographic rank of a given set. We first use this oracle to determine $lexr_{\mathfrak{C}_1}(S_1)$ and thereby compute $\textrm{count}_\mathcal{W}(T,1)$. Next, we use the oracle to determine $lexr_{\mathfrak{C}_2}(S_2)$ and thereby compute $\textrm{count}_\mathcal{W}(T,2)$, using the previously computed value of $\textrm{count}_\mathcal{W}(T,1)$. Continuing this procedure $n$ times, we can compute $\textrm{count}_\mathcal{W}(T,m)$ for all $m = 1, \ldots, n$. This concludes the proof. \end{proof}

\subsection{Hardness of Mechanism Design: Reduction from {\sc LexRank}}


We will prove hardness of the OMD problem via a reduction from \textsc{LexRank}.
Let  $(\mathfrak{C},S,k)$ be an instance of {\sc LexRank} where  $\mathfrak{C} = \{c_1, \ldots, c_n\}$ is a collection of positive integers, $S \subseteq \{1,\ldots,n\}$, and $k$ is integer. We wish to determine whether $lexr_\mathfrak{C}(S) \leq k$. \costasnote{We assume that $|S| \neq 0, n$ as otherwise the problem is trivial to solve.}

We denote by $[n]$ the set $\{1,2,\ldots, n\}$ and $[n+1] = [n] \cup \{n+1\}$. We construct an OMD instance indirectly, by defining an instance of LP3 with the following parameters:
\begin{itemize}
	\item $d_i = 2^{n+1}\left(c_i + \sum_{j=1}^n c_j \right) + 2^{i}$, for $i = 1, \ldots, n$;
	\item $d_{n+1} = 1$;
	\item $x_i = 2$, for all $i$;
	\item $B = 2n+1$;
	\item $p_i = p$ for all $i$, where we leave $p \in [0.5, 1 - \frac{1}{2n+2})$ a parameter.
\end{itemize}
We note that $B > \sum_i x_i p_i$, and thus Lemma~\ref{newvariables} implies that, for all $p$, an instance of LP3 as above arises from some OMD instance $\{a_i,d_i,p_i\}_{i=1}^{n+1}$, in the notation of Section~\ref{sec:prelim}.

Denote by $S^c$ the set $[n] \setminus S$.\footnote{Note that $\{n+1\}$ is in neither $S$ nor $S^c$.} Suppose that, for some value $p$, there is a partially filled node $T^*$ in the canonical LP3 solution such that $T^* \subseteq [n]$ and $|T^*| = n - |S|$. Using Lemma~\ref{uniqueness} we have
\begin{align*}
q_{n+1}(S^c) &= \frac{u^*(S^c \cup \{n+1\}) - u^*(S^c)}{1} \\
&= \max\{cost(T^*) - cost(S^c \cup \{n+1\}),0 \} -  \max\{cost(T^*) - cost(S^c),0 \}\\
&=  \max\{cost(T^*) - cost(S^c) + 1,0 \} -  \max\{cost(T^*) - cost(S^c),0 \}
\end{align*}
Therefore, since the cost of each set is an integer,
$$
q_{n+1}(S^c) = 
\begin{cases}
0 & \mbox{if } cost(S^c) > cost(T^*)\\
1 & \mbox{if } cost(S^c) \leq cost(T^*).
\end{cases}
$$
Since $n+1$ is in neither $S^c$ nor $T^*$,
$$
q_{n+1}(S^c) = 
\begin{cases}
0 & \mbox{if } \sum_{i \in S}d_i  > \sum_{j \in [n] \setminus T^*} d_j \\
1 & \mbox{if } \sum_{i \in S}d_i  \leq \sum_{j \in [n] \setminus T^*} d_j .
\end{cases}
$$
By our construction of the $d_i$'s we can see that since $|T^*| = n - |S|$,
$$
q_{n+1}(S^c) = 
\begin{cases}
0 & \mbox{if } \sum_{i \in S}c_i  > \sum_{j \in [n] \setminus T^*} c_j \\
1 & \mbox{if } \sum_{i \in S}c_i  < \sum_{j \in [n] \setminus T^*} c_j \\
1 & \mbox{if } \sum_{i \in S}c_i = \sum_{j \in [n] \setminus T^*} c_j \textrm{ and } S \leq_{\text{lex}} ([n] \setminus T^*)\\
0 & \mbox{if } \sum_{i \in S}c_i = \sum_{j \in [n] \setminus T^*} c_j \textrm{ and } S >_{\text{lex}} ([n] \setminus T^*)
\end{cases}
$$
Therefore, $q_{n+1}(S^c) = 1$ if $lexr_\mathfrak{C}(S) \leq lexr_{\mathfrak{C}}([n] \setminus T^*)$ and 0 otherwise. 

So our next goal  is to set the parameter $p$ such that \costasnote{there is a partially filled node $T^*$ in the canonical LP3 solution such that $T^* \subseteq [n]$, $|T^*| = n - |S|$, and $lexr_\mathfrak{C}([n] \setminus T^*) = k$}. For such $p$, distinguishing between $q_{n+1}(S^c) = 0$ and $q_{n+1}(S^c) = 1$ would allow us to solve the \textsc{LexRank} instance. The next lemma shows that a $p$ as required can be found in polynomial time.

\begin{lemma}\label{lem:choice of parameter}
In polynomial time, we can identify a $\tilde{p} \in [0.5, 1 - \frac{1}{2n+2})$ with $O(n \log n)$ bits of precision such that the partially filled node in the canonical LP3 solution with parameter $p=\tilde{p}$ is a set $T^* \subseteq [n]$ of size $n-|S|$ and  $lexr_{\mathfrak{C}}([n] \setminus T^*) = k$.
\end{lemma}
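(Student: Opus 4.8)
The plan is to understand how the canonical LP3 solution depends on the parameter $p$, and then choose $p$ so that the flow "stops" at exactly the right node. Recall from Section~\ref{canonicalsolution} that the canonical solution orders all negative nodes $S_1, S_2, \ldots$ by $cost(S) = \sum_{i \in N \setminus S} d_i$ (breaking ties lexicographically) and greedily saturates them, node $S_j$ absorbing $|p(S_j)(\sum_{i\in S_j} x_i - B)|$ units, until a total of $p(N)(\sum_{i\in N}x_i - B)$ has been absorbed. The identity of the partially filled node $T^*$ is thus determined by which prefix of this ordering first exceeds the required total flow; equivalently, $T^*$ is the unique node with
\[
\sum_{j : \, cost(S_j) < cost(T^*)} \bigl|p(S_j)(\textstyle\sum_{i\in S_j} x_i - B)\bigr| \;<\; p(N)\bigl(\textstyle\sum_{i\in N} x_i - B\bigr)
\;\le\; \sum_{j : \, cost(S_j) \le cost(T^*)} \bigl|p(S_j)(\textstyle\sum_{i\in S_j} x_i - B)\bigr|,
\]
with the tie-break among equal-cost nodes resolved lexicographically. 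First I would verify that with the given parameters ($x_i = 2$, $B = 2n+1$) a set $S$ is negative exactly when $2|S| - (2n+1) < 0$, i.e. always (since $|S| \le n+1$ and actually equality $|S|=n+1$ gives $2(n+1)-(2n+1)=1>0$, so $N=[n+1]$ is the unique positive node, consistent with Section~\ref{sec:characterization}), and feasibility holds since $p < 1 - \frac{1}{2n+2}$ guarantees $\sum_i p_i x_i = 2(n+1)p < 2n+1 = B$.

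The key structural observation is the role of the $d_i$'s. Since $d_i = 2^{n+1}(c_i + \sum_j c_j) + 2^i$ for $i \le n$ and $d_{n+1}=1$, the high-order bits of $cost(S) = \sum_{i\notin S} d_i$ encode $\sum_{i\notin S}c_i$ (scaled), the middle bits $2^i$ break ties among subsets of $[n]$ with equal $c$-sum in a way that refines the lexicographic order, and the lowest bit ($d_{n+1}=1$) records whether $n+1 \notin S$. Consequently the cost-ordering of negative nodes, restricted to sets of a fixed size $m$, first groups by whether $n+1$ is present, then sorts by $c$-sum, then lexicographically — exactly the structure needed to make $cost$-comparisons translate into $lexr_{\mathfrak C}$ comparisons, as already exploited in the computation of $q_{n+1}(S^c)$ preceding the lemma. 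I would then argue that for $p \in [0.5, 1-\frac{1}{2n+2})$ the weights $p(S_j) = p^{|S_j|}(1-p)^{n+1-|S_j|}$ are dominated by cardinality: larger sets have astronomically larger $p(S)$ when $p$ is close to $1$, smaller sets dominate when $p = 1/2$ is balanced — and by tuning $p$ continuously we can arrange that the greedy process, after saturating all nodes strictly cheaper than our target, has just barely enough remaining flow to reach (but not fill) the desired node $T^*$.

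Concretely, I want $T^* \subseteq [n]$ with $|T^*| = n-|S|$ and $lexr_{\mathfrak C}([n]\setminus T^*) = k$; note $[n]\setminus T^*$ then has size $|S|$, and such a $T^*$ exists because $k \le \binom{n}{|S|}$ may be assumed (if $k$ exceeds this the {\sc LexRank} instance is trivially true) and $1 \le |S| \le n-1$. The node $T^*$ is determined by the condition that exactly $k$ sets of size $|S|$ within $[n]$ precede $[n]\setminus T^*$ in the $(c\text{-sum}, \text{lex})$ order. I would compute the total flow that must be absorbed by all negative nodes whose cost is at most $cost(T^*)$, and show this is a piecewise-rational, monotone function of $p$ on the interval, so that a suitable $\tilde p$ with $O(n\log n)$ bits exists and can be found by solving the relevant polynomial (in fact linear-in-$p^{\text{powers}}$) inequalities, or by a binary search over dyadic rationals of that precision combined with an explicit formula for the partial sums $\sum_{cost(S) \le c} |p(S)(\sum_{i\in S}x_i - B)|$ grouped by cardinality (each such sum is a small polynomial in $p$ with coefficients that are signed sums of binomial-type counts, evaluable in polynomial time). \textbf{The main obstacle} I anticipate is the bookkeeping needed to show that a single parameter $p$ of only $O(n\log n)$ bits can \emph{simultaneously} (i) place the boundary of the saturated prefix strictly between two consecutive nodes in the cost-order — so that exactly one node is partially filled and it is not accidentally fully saturated, a case Lemma~\ref{uniqueness} excludes — and (ii) ensure that partially-filled node has the prescribed cardinality $n-|S|$ and lies in $[n]$; this requires a careful separation argument showing the relevant flow-partial-sums, as functions of $p$, are bounded away from the target value by a gap that is at least $2^{-O(n\log n)}$, which is where the factor $2^{n+1}$ inflation of the $d_i$'s and the margin $p \le 1 - \frac{1}{2n+2}$ in the allowed range of $p$ will be used.
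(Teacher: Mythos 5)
Your setup agrees with the paper's: you correctly describe the greedy/prefix-sum characterization of the partially filled node, the role of the $d_i$'s in translating cost comparisons into $lexr_{\mathfrak C}$ comparisons, the feasibility condition $2(n+1)p < 2n+1$, and the reduction of the task to hitting a node $T^{*}\subseteq[n]$ of size $n-|S|$ with rank $k$. But the proposal stops exactly at what you yourself label ``the main obstacle,'' and that obstacle is the entire content of the lemma: you never show that a single $\tilde p$ with $O(n\log n)$ bits makes the prescribed node \emph{partially} (not fully, not skipped) filled. The paper's resolution is a concrete argument you do not supply. Because all size-$i$ subsets $T\subseteq[n]$ have the same weight $p^{i}(1-p)^{n+1-i}$, each consecutive pair $(T\cup\{n+1\},T)$ in the cost order absorbs the same amount $p^{i}(1-p)^{n-i}\bigl(2(n-i-p)+1\bigr)$, and --- crucially --- the node $T$ alone accounts for at least a $\frac{1}{2n+2}$ fraction of that pair's capacity whenever $p<1-\frac{1}{2n+2}$. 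Defining $f(p)$ as the flow left over after saturating everything cheaper than the first size-$(n-|S|)$ pair, divided by the per-pair capacity, a \emph{sufficient} condition for $T^{n-|S|}_{k}$ to be the partially filled node is the interval condition $f(p)\in\bigl(k-\frac{1}{2n+2},\,k\bigr)$: the target has slack $\frac{1}{2n+2}$, so no exact tuning is needed. Existence of $p^{*}$ with $f(p^{*})=k-\frac{1}{4n+4}$ follows from $f(p)<0$ for $p<0.5$, $f\bigl(1-\frac{1}{2n+2}\bigr)\ge\binom{n}{|S|}\ge k$, and continuity; and a crude derivative bound $|f'(p)|=O\bigl((4n)^{4n}\bigr)$ on the interval shows that any $\tilde p$ within $\epsilon=O\bigl((4n)^{-4n}/(4n+4)\bigr)$ of $p^{*}$ still satisfies the interval condition, which is precisely where the $O(n\log n)$-bit precision and the polynomial-time bisection come from.

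Your sketched alternative does not substitute for this. You propose to show the partial sums of absorbed flow are ``piecewise-rational, monotone'' in $p$ and then to prove a separation of order $2^{-O(n\log n)}$ between these sums and the target flow; monotonicity is neither proved nor needed in the paper (the paper only uses sign values at the two ends plus continuity and a Lipschitz bound), and the separation claim is exactly the unproven step --- in the ``partial sum equals target'' formulation there is no built-in slack, so it is unclear how you would establish it. Without the ratio-plus-interval device (or an equivalent robustness argument exploiting $p<1-\frac{1}{2n+2}$), the claim that $O(n\log n)$ bits of precision suffice, and hence the lemma itself, remains unjustified.
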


\begin{proof}
In our construction, the lowest cost negative node is $[n]$. Furthermore, the cost of every negative node is unique, and for any $T \subsetneq [n]$ there is no node with cost between that of $T \cup \{n+1\}$  and $T$. Also, if $T$ and $T'$ are proper subsets of $[n]$ and if $|T| > |T'|$, then $cost(T) < cost(T')$.

For each $i$ between $1$ and $n-1$, let $T^i_1, T^i_2, \ldots$ be the ordering of the size-$i$ subsets of $[n]$ in increasing order of cost. In the canonical LP3 solution,
 $[n]$ fills first, and $T_j^i$ fills before $T_{j'}^{i'}$ if it has larger size ($i > i'$) or the same size but smaller cost ($i = i'$ and $j < j'$). Furthermore, each node $T^i_j \cup \{n+1\}$ fills immediately before the node $T^i_j$. 

Our goal is to choose $p$ so that $T^{n-|S|}_k$ is partially filled. 
Indeed, the sets $[n]\setminus T_1^{n-|S|}$ through $[n]\setminus T_k^{n-|S|}$ are precisely the sets counted in the computation of $lexr_\mathfrak{C}([n]\setminus T_k^{n-|S|})$. Notice that lexicographic tie-breaking of $lexr$ is enforced by construction of adding an additional $2^i$ to $d_i$.



The only positive node, $[n+1]$, emits a net flow of $p^{n+1}$, and the node $[n]$  absorbs at most $p^n(1-p)$ flow. For each size $i$ between $n-1$ and $n-|S|+1$, there are $\binom{n}{i}$ sets $T \subseteq [n]$ of size $i$, each of which can absorb
$$|p(T)(2|T| - B)| = (2(n-i) +1)p^i(1-p)^{n+1-i}$$
flow. Furthermore, each set $T \cup \{n+1\}$ can absorb
$$|p(T \cup \{n+1\})(2|T| + 2 - B)| = (2(n-i) - 1)p^{i+1}(1-p)^{n-i}.$$
Thus, in total, $T$ and $T \cup \{n+1\}$ can absorb
$$p^i(1-p)^{n-i}\left((1-p)(2n-2i+1) + p(2n-2i-1) \right)$$
$$= p^i(1-p)^{n-i}(2(n-i-p)+1).$$
Finally, we notice that $T$ is responsible for at least a $1/(2n+2)$ fraction of the quantity above, since
$$\frac{(2(n-i)+1)p^i(1-p)^{n+1-i}}{ (2(n-i-p) + 1)p^{i}(1-p)^{n-i}} \geq 1-p > \frac{1}{2n+2},$$
using that $p<1-\frac{1}{2n+2}$.

If all nodes strictly preceding (i.e. with smaller cost than) $T^{n-|S|}_1 \cup \{n+1\}$ have been saturated, the amount of flow still unabsorbed is
$$p^{n+1} - p^n(1-p) -\!\!\! \sum_{i = n-|S|+1}^{n- 1}\binom{n}{i}p^i(1-p)^{n-i}(2(n - i -p) +1)$$
$$=  \sum_{i = n-|S|+1}^{n}\binom{n}{i}p^i(1-p)^{n-i}(2(i +p - n) -1).$$
Therefore, a sufficient condition for $T_k^{n-|S|}$ to be partially filled in the canonical solution is
$$f(p) \triangleq \frac{\sum_{i = n-|S|+1}^{n}\binom{n}{i}p^i(1-p)^{n-i}(2(i +p - n) -1)}{p^{n-|S|}(1-p)^{|S|}(2|S| -2p +1)} \in \left(k- \frac{1}{2n+2},k\right).$$

We claim that there is such a $p^* \in [0.5,1-\frac{1}{2n+2})$ such that $f(p^*) = k - \frac{1}{4n+4}$. Indeed, for $p < 0.5$, only $[n]$ will ever receive flow, so in this case, $f(p) < 0$. Furthermore, we can lower-bound $f(p)$ by the following ratio (where we add a negative quantity to the numerator)
$$\frac{\sum_{i = n-|S|+1}^{n}\binom{n}{i}p^i(1-p)^{n-i}(2(i +p - n) -1) + \sum_{i=0}^{n-|S|-1} \binom{n}{i}p^{n-i}(1-p)^{i}(2(i+p-n)-1)     }{p^{n-|S|}(1-p)^{|S|}(2|S| -2p +1)}$$
$$=\frac{\sum_{i = 0}^{n}\binom{n}{i}p^i(1-p)^{n-i}(2(i +p - n) -1) - \binom{n}{|S|}p^{n-|S|}(1-p)^{|S|}(-2|S|+2p-1)     }{p^{n-|S|}(1-p)^{|S|}(2|S| -2p +1)}$$
and thus
\begin{align*}
f(p) &\geq \binom{n}{|S|} - \frac{\sum_{i = 0}^{n}\binom{n}{i}p^i(1-p)^{n-i}( 2n-2i -2p +1)    }{p^{n-|S|}(1-p)^{|S|}(2|S| -2p +1)}\\
&= \binom{n}{|S|} - \frac{ 2n - 2p + 1 -2 \sum_{i = 0}^{n}i \binom{n}{i} p^i(1-p)^{n-i}   }{p^{n-|S|}(1-p)^{|S|}(2|S| -2p +1)}\\
&= \binom{n}{|S|} - \frac{ 2n - 2p + 1 - 2pn    }{p^{n-|S|}(1-p)^{|S|}(2|S| -2p +1)}.
\end{align*}
Hence, for $p = 1 - \frac{1}{2n+2}$, we get $f(p) \ge \binom{n}{|S|} \ge k$. Using this, the continuity of $f$, and that $f(p)<0$ for $p<0.5$, we conclude that there is a $p^* \in [0.5, 1-\frac{1}{2n+2})$ such that $f(p^*) = k  - \frac{1}{4n+4}$.

We now consider $\tilde{p} = p^* \pm \epsilon \in [0.5, 1-\frac{1}{2n+2})$. We claim that $f(\tilde{p}) \in (k-\frac{1}{2n+2},k)$ as long as $\epsilon = O\left(\frac{(4n)^{-4n}}{4n+4}\right)$. To show this, we bound the absolute value of the derivative $\frac{df}{dp}$ at all points in $[0.5, 1-\frac{1}{2n+2})$. The numerator of $\frac{df}{dp}$ is a polynomial in $p$ of degree $2n+1$, where the coefficient of each term is, in absolute value, $O(2^{3n} \cdot {\rm poly}(n)) \le O(2^{4n})$---using the crude bound ${n \choose i} \le 2^n$. Furthermore, the denominator $(p^{n-|S|}(1-p)^{|S|}(2|S| -2p +1))^2$ of $\frac{df}{dp}$ is greater than $\left(\frac{1}{2n+2} \right)^{2n}$, since $p \in [0.5, 1-\frac{1}{2n+2})$. Therefore, we can bound the magnitude of the derivative by $O((2n+2)^{2n} (2n+1) 2^{4n}) = O((4n)^{4n})$. Since this bound holds for all points in $[0.5, 1-\frac{1}{2n+2})$, we conclude that 
$$f( \tilde{p} ) \in (f( p^*) - \epsilon O((4n)^{4n}), f( p^*) + \epsilon O((4n)^{4n}))$$
and thus, for $\epsilon = O\left(\frac{(4n)^{-4n}}{4n+4}\right)$ we have that $f(\tilde{p} ) \in (k-\frac{1}{2n+2},k)$.

Thus, we can find the desired $\tilde{p}$ in polynomial time via binary search on $f(p)$, requiring $O(n \log n)$ bits of precision in $p$.\end{proof}

\costasnote{We conclude our reduction as follows: First, we compute $\tilde{p}$ as in Lemma~\ref{lem:choice of parameter}. Next we solve the OMD instance resulting from this choice of parameter. The solution (see Section~\ref{sec:interpretation}) induces expected polynomial-time algorithms $\cal C$ and $\cal S$. We use them to sample (in expected polynomial-time) from the allocation and price rule of the optimal auction for type $S^c$. We have proven that this type receives item $n+1$ with probability $1$ if $(\mathfrak{C},S,k) \in \textsc{LexRank}$ and with probability $0$ otherwise. So a single sample from the allocation rule of the optimal auction suffices to tell which one is the case, since if $q_{n+1}(S^c)=1$ then every sample from the allocation rule should allocate the item and if $q_{n+1}(S^c)=0$ then no sample should allocate the item. So if we can solve OMD in expected polynomial-time, then we can solve \textsc{LexRank} in expected polynomial-time by a. finding $\tilde{p}$ using Lemma~\ref{lem:choice of parameter}; b. solving the resulting OMD instance; c. drawing a single sample from the allocation rule of the auction for type $S^c$ and outputting ``yes'' if and only if item $n+1$ is allocated by the drawn sample.~$\qed$}

\section{Beyond Additive or Computationally Bounded Bidders}\label{beyondadditive}

\subsection{Budget-Additive Bidders} \label{sec:budget additive bidders}

We have already remarked in Section~\ref{sec:related} that when the bidder valuations have combinatorial structure, it becomes much easier to embed computationally hard problems into the optimal mechanism design problem.  In this spirit, Dobzinski et al.~\cite{DobzinskiFK11} show that optimal mechanism design for OXS bidders is ${\tt NP}$-hard. While we won't define OXS valuations, we illustrate the richness in their structure by recalling that in~\cite{DobzinskiFK11} the items are taken to be edges of a graph $G=(V,E)$, and there is a single bidder whose valuation is drawn from a distribution that includes in its support the following valuation:
$$f(E')=\max\left\{|A|~~\vline~~\begin{minipage}{6.7cm}$A \subseteq E'$, {every connected component of \\$G'=(V,A)$ is either acyclic or unicyclic} \end{minipage}\right\}, \forall E' \subseteq E.$$

While the valuations used in~\cite{DobzinskiFK11} are quite rich, we notice here that the techniques of~\cite{DobzinskiFK11} can be used to establish hardness of optimal mechanism design for a bidder with very mild combinatorial structure, namely budget-additivity. Indeed, consider the problem of selling a set $N = \{1,\ldots, n\}$ of items to a budget-additive bidder whose value $x_i$ for each item is some  deterministically known integer, but whose budget is only probabilistically known. In particular, suppose that, with probability $(1-\epsilon)$, the bidder is additive, in which case her value $v_{a}(S)$ for each subset $S$ is $\sum_{i \in S}x_i$. With probability $\epsilon$, however, she has a positive integer budget $B$:  she values each subset $S$ at $v_b(S) = \min\{ \sum_{i \in S}x_i, B\}$. That is, when she has a budget, she receives at most $B$ utility from any subset. We claim that the optimal mechanism  satisfies:

\begin{claim}\label{budgetlemma}
Suppose 
  $\epsilon < \frac{1}{1+\sum x_i}$. Then every optimal individually rational and Bayesian incentive compatible direct mechanism for the budget-additive bidder described above has the following form:
\begin{itemize}
	\item If the bidder is unbudgeted, she receives all items and is charged $\sum_{i \in N} x_i$.
	\item If the bidder is budgeted, she receives a probability distribution over the subsets $T$ of items such that $\sum_{i\in T} x_i$ is as large a value as possible without exceeding $B$. She is charged that value.
\end{itemize}
\end{claim}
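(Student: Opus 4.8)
The plan is to argue by a careful case analysis on the bidder's reported type, using incentive compatibility together with a revenue-accounting argument that exploits the fact that $\epsilon$ is tiny. There are only two types: ``unbudgeted'' (probability $1-\epsilon$) and ``budgeted'' (probability $\epsilon$). Let the optimal BIC, IR direct mechanism give the unbudgeted type a (possibly randomized) allocation with expected indicator vector $\vec{q}_a$ and expected payment $\tau_a$, and similarly $\vec{q}_b, \tau_b$ for the budgeted type.

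First I would establish the claim about the \emph{unbudgeted} bidder. The key observation is that, subject to IR, the maximum possible payment the unbudgeted bidder will ever accept is $\sum_i x_i q_{a,i}$, so $\tau_a \le \vec{x}\cdot\vec{q}_a \le \sum_i x_i$, with equality in the last step only if $\vec q_a = \vec 1$. I would show that any deviation from ``give all items, charge $\sum_i x_i$'' to the unbudgeted type loses a strictly positive amount of revenue from that type --- specifically at least some amount bounded below by a constant depending only on the $x_i$'s (here is where integrality of the $x_i$ matters: a suboptimal deterministic-value allocation loses at least $1$, and randomization only helps if it relaxes the budgeted type's IC constraint, which it cannot do profitably). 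Meanwhile the total revenue obtainable from the budgeted type is at most $\epsilon \cdot B \le \epsilon \sum_i x_i < 1$ by the hypothesis $\epsilon < \frac{1}{1+\sum x_i}$. So no rearrangement of the budgeted type's contract can compensate for even the smallest loss on the unbudgeted side; hence every optimal mechanism must extract the full $\sum_i x_i$ from the unbudgeted type, which (by the IR/value bound above) forces $\vec q_a = \vec 1$ and $\tau_a = \sum_i x_i$.

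Next I would pin down the \emph{budgeted} bidder's contract. Having fixed the unbudgeted type's contract as $(\vec 1, \sum_i x_i)$, the budgeted type's IC constraint reads $v_b(\text{receiving }\vec q_b) - \tau_b \ge v_b(\text{receiving all}) - \sum_i x_i = B - \sum_i x_i$ (using $B \le \sum x_i$, which we may assume, else the budget is vacuous and the claim degenerates). Since the budgeted type's value for her own allocation is at most $B$, this gives $\tau_b \le B - (B - \sum_i x_i)$ only in the degenerate case; the real content is that the auctioneer wants $\tau_b$ as large as possible, which by the budgeted type's own IR ($v_b(\vec q_b) - \tau_b \ge 0$) and by the unbudgeted type's IC (the unbudgeted type must not prefer the budgeted contract: $\vec x \cdot \vec q_b - \tau_b \le 0$) forces $\tau_b \le \min\{\E[v_b(\text{allocation})], \vec x \cdot \vec q_b\}$. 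Maximizing the payment then amounts to choosing the allocation distribution to maximize $\min\{\E[\min(\sum_{i\in T}x_i, B)], \E[\sum_{i\in T}x_i]\}$ over distributions on subsets $T$; I would argue the optimum is achieved (deterministically) by any single $T$ with $\sum_{i\in T} x_i$ as large as possible without exceeding $B$, and that the charge equals that value. Concavity/averaging shows randomization cannot beat the best deterministic choice, and the unbudgeted-IC constraint $\vec x\cdot\vec q_b \le \tau_b$ is exactly tight there.

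\textbf{Main obstacle.} The delicate step is ruling out randomized mechanisms that trade a tiny, \emph{sub-unit} revenue loss on the unbudgeted type for a gain on the budgeted type --- i.e.\ showing the unbudgeted contract is \emph{exactly} $(\vec 1, \sum x_i)$ rather than merely close to it. This is where integrality of the $x_i$ is essential: I must show that any optimal unbudgeted allocation is in fact $\vec 1$ (a boundary point of the probability cube), for which I would invoke a vertex/extreme-point argument on the unbudgeted-type subproblem after fixing the budgeted contract, combined with the observation that if $\vec q_a \ne \vec 1$ then $\tau_a \le \vec x\cdot \vec q_a$ is bounded away from $\sum x_i$ by at least $\min_i x_i$ worth, dwarfing the $<1$ total available from the budgeted type. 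Handling the tie-breaking rule (favoring higher expected payment, then lexicographic) cleanly is a minor but necessary bookkeeping point to guarantee uniqueness of the induced type-to-outcome map.
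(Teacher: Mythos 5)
There is a genuine gap, and it sits exactly where you flagged it. Your step 1 rules out suboptimal unbudgeted contracts by a discrete-gap argument: a ``suboptimal deterministic-value allocation loses at least $1$'' while the budgeted type contributes at most $\epsilon\sum_i x_i<1$. That comparison is valid only against deterministic deviations; the claim must also exclude mechanisms that give the unbudgeted type a \emph{randomized} allocation $\vec q_a$ with, say, $q_{a,1}=1-\delta$, whose revenue loss $(1-\epsilon)\,\delta x_1$ is arbitrarily small. Your proposed patch is incorrect: it is false that $\vec q_a\neq \vec 1$ forces $\vec x\cdot\vec q_a$ to be at least $\min_i x_i$ below $\sum_i x_i$, and an extreme-point argument on the unbudgeted subproblem can only show that \emph{some} optimum is a vertex, whereas the claim asserts the form of \emph{every} optimal mechanism. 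The paper avoids this entirely by invoking the two-type lemma of Dobzinski--Fu--Kleinberg, whose proof is a linear accounting rather than a gap argument: combining the unbudgeted type's IC with the budgeted type's IR gives
$\text{Rev}\le(1-\epsilon)\,\vec x\cdot\vec q_a+\mathbb{E}\bigl[v_b(A_b)-(1-\epsilon)v_a(A_b)\bigr]\le(1-\epsilon)\sum_i x_i+\max_T\bigl(v_b(T)-(1-\epsilon)v_a(T)\bigr)$,
and for $\epsilon<\frac{1}{1+\sum_i x_i}$ the last maximum equals $\epsilon S^*$, where $S^*$ is the largest subset sum not exceeding $B$, \emph{independently} of the unbudgeted contract. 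Hence any shortfall in $\vec x\cdot\vec q_a$, however tiny, is an uncompensable loss, and tracing equality through the chain yields exactly the claimed form (allocation to the budgeted type supported on sets of sum $S^*$, payments $\sum_i x_i$ and $S^*$). Some argument of this kind is what your write-up is missing.

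Your step 2 also has the key inequality reversed: the unbudgeted type's IC against the budgeted contract forces $\tau_b\ge\vec x\cdot\vec q_b$, not $\tau_b\le\vec x\cdot\vec q_b$ as you first write (you state it correctly only in the last sentence). Taken literally, your stated objective ``maximize $\min\{\mathbb{E}[v_b],\,\vec x\cdot\vec q_b\}$'' is maximized by giving the budgeted type all items at price $B$, which is not the claimed optimum when $B<\sum_i x_i$. The correct derivation sandwiches $\vec x\cdot\vec q_b\le\tau_b\le\mathbb{E}[v_b(A_b)]\le\vec x\cdot\vec q_b$, forcing the allocation to be supported on sets with $\sum_{i\in T}x_i\le B$ and $\tau_b=\vec x\cdot\vec q_b$, after which maximizing gives support on sets of sum exactly $S^*$ (note the optimum need not be deterministic when several sets attain $S^*$). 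The tie-breaking discussion is immaterial here, since the claim concerns BIC, IR direct mechanisms.
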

This claim follows from a lemma of \cite{DobzinskiFK11}, showing that {if the bidder has no budget she must receive her value maximizing bundle, while if she is budgeted she must receive some bundle $T$ maximizing $v_{b}(T) - (1-\epsilon)v_a(T)$}. Determining the optimal direct mechanism is clearly hard in this context, as it is {\tt NP}-hard to compute a  subset $T$ with the largest value that does not exceed $B$. {Using the same arguments as in Section~\ref{sec:prelim} we can extend this lower bound to all mechanisms, by noticing that \costasnote{any sample from the allocation} to the budgeted bidder answers whether there exists some $T$ such that $\sum_{i \in T} x_i =B$.}

\begin{theorem} \label{thm:main2}
There is a polynomial-time Karp reduction from the subset-sum problem to the optimal mechanism design problem of selling multiple items to a single  budget-additive quasi-linear bidder whose values for the items are known rational numbers, and whose budget is equal to some finite rational value or $+\infty$ with rational probabilities.
\end{theorem}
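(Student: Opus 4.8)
The plan is to reduce from the (decision version of the) subset-sum problem, exploiting Claim~\ref{budgetlemma} which pins down the exact form of every optimal mechanism for the budget-additive instance. Given a subset-sum instance consisting of positive integers $w_1,\ldots,w_n$ and a target $T$, I would construct the OMD instance in which the single budget-additive bidder has deterministic item values $x_i = w_i$, is additive with probability $1-\epsilon$ and has budget $B = T$ with probability $\epsilon$, where $\epsilon$ is chosen to be any rational number strictly less than $\frac{1}{1+\sum_i w_i}$ (for instance $\epsilon = \frac{1}{2+\sum_i w_i}$); all these parameters are rational and computable in polynomial time, so this is a legitimate polynomial-time Karp reduction to an OMD instance of the required form. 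The budget is finite rational ($=T$) with probability $\epsilon$ and $+\infty$ (i.e.\ the bidder is additive) with probability $1-\epsilon$, matching the statement.

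Next I would invoke Claim~\ref{budgetlemma}: under the stated bound on $\epsilon$, every optimal IR, BIC direct mechanism gives the budgeted bidder a distribution over subsets $T'$ maximizing $\sum_{i\in T'} w_i$ subject to $\sum_{i\in T'} w_i \le B = T$, and charges exactly that value. Hence the revenue collected from the budgeted type equals $\max\{\sum_{i\in T'} w_i : T' \subseteq N,\ \sum_{i\in T'} w_i \le T\}$, and in particular this maximum equals $T$ if and only if there is a subset summing exactly to $T$, i.e.\ iff the subset-sum instance is a yes-instance. So computing the optimal direct mechanism, or even just the expected payment of the budgeted type, already decides subset-sum. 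To extend this from direct mechanisms to arbitrary (possibly randomized, possibly indirect) mechanisms I would reuse the Revelation Principle argument from Section~\ref{sec:prelim}: any optimal mechanism induces an optimal BIC, IR direct mechanism with the same type-to-(allocation, price) map, so any sample from the allocation rule presented to the budgeted type is a sample from a subset $T'$ with $\sum_{i\in T'} w_i$ equal to the above optimum; reading off $\sum_{i \in T'} w_i$ from the single sampled bundle and testing whether it equals $T$ answers subset-sum. Thus an efficient solver for OMD yields an efficient decision procedure for subset-sum, giving the Karp reduction (for the promise-free decision version one should note that the sampled optimal bundle value is deterministic across optimal mechanisms, so a single sample suffices and the reduction is genuinely a Karp, not merely Turing, reduction).

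The main obstacle — and the only real content beyond bookkeeping — is justifying Claim~\ref{budgetlemma} itself, i.e.\ importing the structural lemma of \cite{DobzinskiFK11} that characterizes which bundle each type must receive. Concretely one has to argue: (i) by IR and BIC, the unbudgeted (additive) type must receive her value-maximizing bundle, which is all of $N$ at price $\sum_i w_i$, since any mechanism can be improved to do so without hurting incentives; and (ii) the budgeted type must receive a bundle maximizing $v_b(T') - (1-\epsilon) v_a(T')$, which, because the $w_i$ are integers and $\epsilon < \frac{1}{1+\sum_i w_i}$ makes the additive correction term too small to change the argmax among integer-valued bundles, is exactly a bundle with $\sum_{i\in T'} w_i$ as large as possible subject to not exceeding $B$. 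Getting the quantitative choice of $\epsilon$ right so that this tie-insensitivity holds is the delicate point, but it is precisely the regime identified in the claim's hypothesis, so once Claim~\ref{budgetlemma} is in hand the theorem follows by the assembly described above.
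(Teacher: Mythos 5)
Your proposal is correct and follows essentially the same route as the paper: set $x_i = w_i$, $B = T$, $\epsilon$ a small rational below $\frac{1}{1+\sum_i w_i}$, invoke Claim~\ref{budgetlemma} (via the lemma of \cite{DobzinskiFK11}) to pin down the budgeted type's allocation as a maximum-value bundle not exceeding $B$, and then decide subset-sum from a single sample of that type's allocation (or its price), extending to arbitrary mechanisms exactly as in Section~\ref{sec:prelim}. The paper is no more detailed than you are on justifying Claim~\ref{budgetlemma}, so your treatment matches its level of rigor.
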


\subsection{Further Discussion: Powerful Bidders} \label{sec:powerfull bidders}

\subsubsection{{\large \tt NP} Power}
We observe that if the budget-additive bidder described in the previous section has the ability to solve {\tt NP}-hard problems, then there is a simple indirect mechanism which implements the above allocation rule: The seller offers each set $S$ at price $\sum_{i \in S}x_i$, and leaves the computation of the set $T$ to the bidder. (As remarked in Section~\ref{sec:prelim} we can use small rewards to guarantee that the same allocation rule is implemented.) Such a mechanism is intuitively unsatisfactory, however, and violates our requirement (see Section~\ref{sec:interpretation}) that a bidder be able to compute her  strategy efficiently. While an optimal indirect mechanism is easy for the seller to construct and to implement, it is intractable for the bidder to determine her optimal strategy in such a mechanism. Conversely, implementing the direct mechanism requires the seller to solve a subset sum instance. Thus, shifting from a direct to an indirect mechanism allows for a shift of computational burden from the seller to the bidder.

\subsubsection{{\large \tt PSPACE} and Beyond}
Indeed, we can generalize this observation to show that optimal mechanism design for a polynomial-time seller becomes much easier if {the bidder is quasi-linear, computationally unbounded, and has} a known finite maximum possible valuation for an allocation. The intuition is that, if some canonical optimal direct mechanism is computable and implementable in polynomial space, then we can construct an extensive form indirect mechanism whereby the bidder first declares her type and then convinces the seller, via an interactive proof with completeness 1 and low soundness, what the allocation and price would be {for her type} in the canonical optimal direct revelation BIC and IR mechanism.\footnote{We have fixed a canonical direct mechanism to avoid complications arising when the optimal mechanism is not unique.} In the event of a failure, the bidder is charged a large fine, significantly greater than her maximum possible valuation of any subset. The proof that this protocol achieves, in a subgame perfect equilbrium, expected revenue equal to the optimal direct BIC and IR revenue  follows from ${\tt IP} = {\tt PSPACE}$~\cite{Shamir:1992:IP:146585.146609}. In particular, we note that the canonical solution of the mechanism design instances in the hardness proof of Section~\ref{hardnessproof} can indeed be found in {\tt PSPACE}, and thus we can construct an easily-implementable extensive form game which achieves optimal revenue in subgame perfect equilibrium. Reaching this equilibrium, however, requires the bidder to have {\tt PSPACE} power.

Indeed, if there are two or more computationally-unbounded bidders, then we can replace the assumption that the optimal direct mechanism is computable in {\tt PSPACE} with the even weaker assumption that it is computable in nondeterministic exponential time, using the result that ${\tt MIP} = {\tt NEXP}$~\cite{DBLP:journals/cc/BabaiFL91} to achieve the optimal revenue in a perfect Bayesian equilibrium. The intuition is that our mechanism now asks all players to (simultaneously) declare their type. Two players are selected and required to declare the allocation and price which the canonical optimal direct mechanism would implement given this type profile, and they then perform a multiparty interactive proof to convince the seller of the correctness of the named allocation and price.

Since the optimal allocation might be randomized and the {prices} might require exponentially many bits to specify, we must modify the above outline. The rough ideas are the following. First, a price of exponential bit complexity but magnitude bounded by a given value can be obtained as the expectation of a distribution that can be sampled with polynomially many random bits in expectation. Now, after the bidders declare their types, the seller reveals a short random seed $r$, and requires the bidders to draw the prices and allocation from the appropriate distribution, using $r$ as a seed. The bidders must prove the actual deterministic allocation and prices that the optimal direct mechanism would have obtained  when given seed $r$ to sample exactly from the optimal distribution. If $r$ does not have sufficient bits to determine the sample, then the bidders prove interactively that additional bits are needed, and further random bits are appended to $r$. If the bidders ever fail in their proof, the mechanism terminates and they are charged a large fine. Regardless of whether or not the players ever attempt to prove an incorrect statement, the protocol  terminates in expected polynomial time. 

We omit a formal proof of the above observations. {We also note that we do not present the mechanisms of this section as practical. But we want to point out that the complexity of the optimal mechanism design problem may become trivial if no computational assumptions are placed on the bidders.}

\vspace{-5pt}\section{Conclusion}
Prior to this work, the complexity of computing revenue-optimal mechanisms was poorly understood despite considerable effort in both economics and algorithmic game theory. Exact solutions were known only in restricted classes of instances or for constrained input models, yet complexity lower bounds were known only for instances which injected rich combinatorial structure on the valuations or the allowable allocations. In this paper, we provide compelling lower bounds for one of the simplest settings:  a single additive bidder with independent values for the items. Our lower bound applies to the most general computational model, placing no restriction on the type or encoding of the mechanism, except that its outcomes be efficiently samplable. The proof technique is in itself interesting, developing relaxations to a folklore LP for revenue-optimization to narrow into a family of instances that are rich enough to encode {\tt \#}P-hard problems, yet are amenable to analysis.
 
We note that our results do not preclude the existence of a fully polynomial-time approximation scheme (FPTAS) for the mechanism design problem. Indeed, this remains an open question even in the case of a single additive bidder. \vspace{-8pt}

\bibliographystyle{abbrv}
\bibliography{costasbibold}  
\end{document}